\documentclass[11pt]{article}
\usepackage{units}
\usepackage{xcolor}
\definecolor{Orchid}{rgb}{0.3568627450980392, 0.21176470588235294,0.5568627450980392}
\definecolor{Peach}{rgb}{0.9137254901960784, 0.5725490196078431, 0.17254901960784313}
\usepackage{appendix}
\usepackage{subfig}
\usepackage{caption}
\usepackage{tikz}
\usetikzlibrary{shapes.geometric, arrows.meta, positioning}

\tikzset{
    block/.style={
        rectangle,
        draw,
        text width=5em,
        text centered,
        minimum height=12mm,
        node distance=2cm, 
    },
    state/.style={
        rectangle,
        draw,
        minimum width=5em,
        minimum height=12mm,
        text centered
    }
}
\usepackage{graphicx}
\usepackage{pgfplots}
\usetikzlibrary{matrix,arrows,decorations.pathmorphing}
\usepackage{pgfplotstable}
\usepgfplotslibrary{fillbetween}
\pgfplotsset{compat=1.8, plot coordinates/math parser=false}
\usepackage[percent]{overpic}
\usepackage{booktabs}
\usetikzlibrary{plotmarks}
\DeclareUnicodeCharacter{2212}{−}
\usepgfplotslibrary{groupplots,dateplot}
\usetikzlibrary{patterns,shapes.arrows}
\usetikzlibrary{positioning, arrows.meta}
\usepackage{amsmath,amssymb,amscd,amsthm,ifthen,color,framed,bm}
\allowdisplaybreaks
\usepackage{algorithm}
\usepackage{algpseudocode}
\algnewcommand{\Initialize}[1]{%
  \State \textbf{Initialize:}
  \Statex \hspace*{\algorithmicindent}\parbox[t]{.8\linewidth}{\raggedright #1}
}
\algnewcommand{\Req}[1]{%
  \State \textbf{Require:}
  \Statex \hspace*{\algorithmicindent}\parbox[t]{.8\linewidth}{\raggedright #1}
}
\usepgfplotslibrary{fillbetween}
\usetikzlibrary{positioning, arrows.meta}
\usepackage{rh_defs_21}
\usetikzlibrary{decorations.pathreplacing}
\usetikzlibrary{decorations.pathreplacing,calligraphy}
\usepgfplotslibrary{fillbetween}

\usepackage[
    backend=biber,
    url=false,
    isbn=false,
    doi=false,
    date=year,
    hyperref=auto,
    style=alphabetic-verb,
    natbib=true,
    sorting=nty,
    giveninits=true,
    maxnames=7,
    maxbibnames=10,
]{biblatex}
\DeclareSourcemap{
    \maps[datatype=bibtex]{
        \map{
            \step[fieldset=language, null]
            \step[fieldset=extra, null]
            \step[fieldset=volume, null]
            \step[fieldset=number, null]
            \step[fieldset=pages, null]
            \step[fieldset=note, null]
            \step[fieldset=series, null]
            \step[fieldset=publisher, null]
        }
    }
}
\bibliography{constrained_coding.bib} 

\usepackage{comment}

\usepackage{hyperref}
\hypersetup{
    unicode=false,          
    pdftoolbar=true,        
    pdfmenubar=true,        
    pdffitwindow=false,     
    pdfstartview={FitH},   
    pdfauthor={Reinhard Heckel},    
    pdfsubject={Subject},   
    pdfproducer={Producer},
    pdfnewwindow=true,      
    colorlinks=true,     
    linkcolor=red,          
    citecolor=green,        
    filecolor=magenta,      
    urlcolor=cyan           
}

\newtheorem{lemma}{Lemma}
\newtheorem{definition}{Definition}

\newtheorem{theorem}{Theorem}


\setlength{\oddsidemargin}{0pt}
\setlength{\evensidemargin}{0pt}
\setlength{\textwidth}{6.5in}
\setlength{\topmargin}{0in}
\setlength{\textheight}{8.5in}

\usepackage{bbm}

\newcounter{dummy}
\begin{document}
\begin{center}

{\bf{\LARGE{
Embracing Errors Is More Efficient Than Avoiding Them Through Constrained Coding for DNA Data Storage
}}}
\stepcounter{dummy} 
\vspace*{.2in}

{\large{
\begin{tabular}{cccc}
Franziska Weindel $^{\ast}$, Andreas L. Gimpel $^{\ddagger} $, Robert N. Grass $^{\ddagger}$, and Reinhard Heckel $^{\ast}$
\end{tabular}
}}

\vspace*{.05in}

\begin{tabular}{c}
$^\ast$Dept. of Electrical and Computer Engineering, Technical University of Munich, \\ Arcistrasse 21, 
 80333, Munich, Germany \\
$^{\ddagger} $ Dept. of Chemistry and Applied Biosciences, ETH Zürich, \\
Vladimir-Prelog-Weg 1-5, 8093, Zurich, Switzerland
\end{tabular}

\vspace*{.1in}

\today

\vspace*{.1in}

\end{center}

\markboth{IEEE TRANSACTIONS ON INFORMATION THEORY}%
{Shell \MakeLowercase{\textit{et al.}}: Bare Demo of IEEEtran.cls for Journals}


\begin{abstract}
DNA is an attractive medium for digital data storage. When data is stored on DNA, errors occur, which makes error-correcting coding techniques critical for reliable DNA data storage. To reduce the errors, a common technique is to include constraints that avoid homopolymers (consecutive repeated nucleotides) and balance the GC content, as sequences with homopolymers and unbalanced GC content are often associated with higher error rates. However, constrained coding comes at the cost of an increase in redundancy. An alternative is to control errors by randomizing the sequences, embracing errors, and paying for them with additional coding redundancy. In this paper, we determine the error regimes in which embracing substitutions is more efficient than constrained coding for DNA data storage. 
Our results suggest that constrained coding for substitution errors is inefficient for existing DNA data storage systems. Theoretical analysis indicates that for constrained coding to be efficient, the increase in substitution errors for nucleotides in homopolymers and sequences with unbalanced GC content must be very large. Additionally, empirical results show that the increase in substitution, deletion, and insertion rates for these nucleotides is minimal in existing DNA storage systems.

\end{abstract}

\section{Introduction}

DNA data storage is an emerging storage medium due to its high density, longevity, and energy efficiency. In DNA data storage, a string of bits is converted into multiple DNA sequences composed of the four bases adenine (A), cytosine (C), guanine (G), and thymine (T). 
The DNA sequences can be stored for long periods and read using sequencing technologies. 

However, synthesis (writing), storage, and sequencing (reading) are error-prone. Thus, reliable data storage can only be achieved using error-correcting codes. Error-correcting codes allow for error detection and correction at the cost of added redundancy. Since synthesis and sequencing are expensive, the goal is to achieve reliable data storage using minimal redundancy, i.e., design schemes that maximize the code rate (ratio of the number of information bits to the total number of nucleotides synthesized), while achieving a vanishing probability of decoding error as the sequence length increases.

The code rate is upper-bounded by the channel capacity and the optimal level of redundancy depends on the error rates of the DNA storage system. 
A common technique to reduce the number of errors is constrained coding, used by early works on DNA data storage~\citep{goldman_towards_2013,grass_robust_2015,bornholt_dna-based_2016, church_next-generation_2012}.  
Constrained coding avoids systematic errors by removing error-prone sequences from the set of possible sequences. 

In DNA data storage, systematic errors are related to the biochemical structure of the DNA sequences, and experiments have shown that sequences with homopolymers (consecutive repeated nucleotides) and unbalanced GC content have higher error rates~\citep{ross_characterizing_2013,bragg_shining_2013,bohlin_estimation_2019,stoler_sequencing_2021}. 
Therefore, much research is devoted to code constructions that constrain the length of homopolymers and balance the GC content to improve the reliability of DNA storage systems~\citep{immink_properties_2020, dube_dna_2019,benerjee_homopolymers_2022,erlich_dna_2017,nguyen_capacity-approaching_2021,press_hedges_2020,park_iterative_2022}. 

\begin{figure}[t]
\centering
    \includegraphics[scale=0.27]{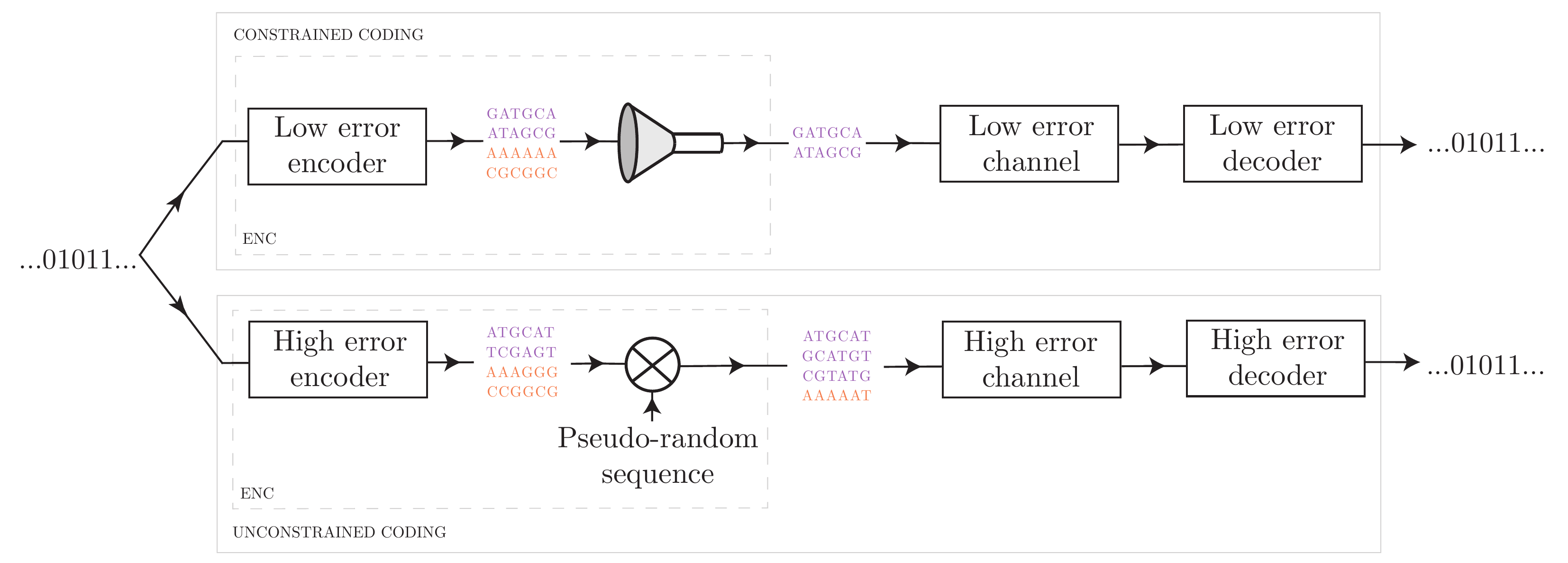}
\captionsetup{labelfont=bf}
\caption{
Constrained coding removes error-prone sequences to reduce the number of errors at the cost of fewer sequences available to store information. Unconstrained coding controls the error rate by modulo-4 addition of the input sequences with a pseudo-random sequence. This reduces the occurrence of error-prone sequences, but may require more coding redundancy to achieve a vanishing probability of decoding error as the sequence length increases.
}
\label{fig1}
\end{figure}

However, limiting the number of sequences available to store information reduces the code rate, as already discussed in \citet{shannon_mathematical_1948}'s seminal work on information theory.
Consequently, there is a trade-off between maximizing the number of sequences for information storage and improving system reliability.
In this paper, we explore this trade-off by comparing two approaches to code design, as illustrated in Figure~\ref{fig1}: 
\begin{itemize}
\item \textbf{Constrained coding.} Constrained coding excludes certain sequences (e.g., those containing homopolymers). This has the potential to reduce errors, but at the cost of fewer sequences available to store information. Constrained coding has been adapted in early DNA storage systems~\citep{goldman_towards_2013,grass_robust_2015,bornholt_dna-based_2016, church_next-generation_2012}. 

\item \textbf{Unconstrained coding.} 
An alternative is to embrace errors and not exclude any sequences, but to minimize structural errors through randomization. Randomization ensures that sequences are random for statistical purposes; thus long homopolymers occur with low probability and the mean GC content is balanced. However, 
allowing all possible sequences for information storage may come at the cost of higher error rates, which must be controlled with more coding redundancy. This approach has been adapted in later DNA storage systems~\citep{antkowiak_low_2020,organick_random_2018}. 
\end{itemize}

Our goal is to understand whether constrained or unconstrained coding maximizes the code rate. We study constrained coding for homopolymers and GC content in two distinct settings, focusing only on substitution errors.  

Our main theoretical finding is that for constrained coding to be efficient, the increase in substitution errors for nucleotides in homopolymers and sequences with unbalanced GC content must be very large. Additionally, the empirical results show that the increase in substitution, deletion, and insertion rates for these nucleotides is minimal in existing DNA storage systems.

The paper is organized as follows. In Section~\ref{Homopolymers}, we analyze constraints on homopolymer length and provide achievable code rates for both coding schemes in the presence of substitution errors. We then use these results to determine the efficiency of constrained versus unconstrained coding in terms of achievable code rates. In Section~\ref{GC-Content}, we study constraints on GC content and state Gilbert-Varshamov based lower bounds for the code rate for both coding schemes in the presence of substitution errors. We then use these results to determine the error regimes in which constrained coding for GC content achieves a higher lower bound than unconstrained coding. Finally, in Section~\ref{emp}, we present experimental results on substitution, deletion, and insertion rates as a function of homopolymer length and GC content to determine the error regimes in which existing DNA storage systems lie.

\section{Related Work}\label{sec2}

There is a large body of work on homopolymer and GC content constrained codes for DNA data storage, motivated by two factors. 
First, long stretches of homopolymers and sequences with unbalanced GC content are challenging to read and write. 
Second, run-length limited and direct current (DC) free codes are very successful and prevalent in conventional data storage. Constrained coding is widely used for data storage on hard disks, optical media, and magnetic tapes (see \cite{schouhamer_immink_codes_1999} Table 1.1 for an overview), which has sparked interest in its possible application to DNA data storage.

Run-length limited codes, which are common in conventional data storage, are similar to homopolymer constrained codes. However, their redundancy can be offset by a gain in data density. Run-length limited or $(d,k)$-codes generate codewords with a minimum of $d$ and a maximum of $k$ binary zeros between binary ones. Hence, homopolymer constrained codes can be regarded as a subclass of $(d,k)$-codes. In magnetic and optical recording, a minimum run-length constraint is imposed to reduce the inter-symbol interference between adjacent transmissions. The goal is to increase system reliability, similar to how homopolymer and GC content constraints are designed to reduce the average number of errors. However, in conventional data storage, the code rate loss due to the minimum run-length constraint can be compensated for by increasing the clock rate. 
The minimum run-length constraint ensures sufficient time between adjacent transmissions, allowing for shorter bit windows and more data to be stored on the same physical space \citep{simic_coding_1995,zhang_performance_2007}. In DNA data storage, the loss in code rate due to homopolymer and GC content constraints cannot be offset by a gain in data density. Therefore, it is not clear whether the success of run-length limited codes in conventional data storage translates to DNA data storage.

Similarly, DC free codes, which are common in conventional data storage, share parallels in code construction with GC content constrained codes, although their objectives differ. 
DC free codes balance the number of binary zeros and ones. However, while GC content constrained codes are designed to reduce the average number of errors, DC free codes are designed to maintain the synchronization of the decoder \citep{schouhamer_immink_codes_1999}. For example, in optical discs, DC free codes prevent written data from interfering with the servo system. The efficiency of these codes cannot be directly applied to DNA data storage, because error detection and correction are challenging when the synchronization of the decoder is lost \citep{immink_efmplus_1995}.

The benefits of constrained coding may be compromised when synchronizing in a noisy channel and in systems in which errors are not limited to specific error-prone bit patterns. For example, \citet{kautz_fibonacci_1965} considers synchronization in a noisy channel. He suggests adding a minimum run-length constraint~$d$ to reduce the probability that the maximum run-length constraint~$k$ is violated due to noise. However, \citet{tang_block_1970} show that for synchronization in a noisy channel, $(d,k)$~-codes lead to strictly lower code rates than, for example, using a fraction of the maximum run-length~$k$. 
\citet{immink_weakly_1997} introduces the concept of weakly constrained codes and shows that they can achieve higher code rates than $(0,k)$- \citep{immink_weakly_1997} and DC free codes \citep{lee_dc-free_2009} for synchronization in noisy channels. Weakly constrained codes allow sequences that violate constraints with low probability and are conceptually similar to unconstrained coding with randomization, considered here as an alternative to constrained coding.

\citet{buzaglo_row-by-row_2017} find that for flash memory, a combination of weakly constrained codes and error-correcting codes can achieve a higher code rate than removing all error-prone sequence patterns, provided the error rate is low. \citet{li_capacity_2019} estimate the capacity of channels in which errors are due to inter-cell inference in specific bit patterns. They model the input sequences as a Markov chain to control the error rate by dictating the probability of writing an error-prone bit pattern. The capacity expression by \citet{li_capacity_2019} characterizes the error regimes in which unconstrained, constrained, and weakly constrained coding are efficient for flash memory. Our approach differs from theirs in that we simply randomize the DNA strands to control the probability of writing error-prone patterns. We do not explore the optimal proportion of error-prone DNA sequences—that is, at a fixed code rate, the optimal proportion of redundancy allocated to constrained and error-correcting coding to achieve the lowest bit error rates—since we are interested in simpler code designs.

In DNA data storage, much research is devoted to finding capacity achieving homopolymer and GC content constrained codes, as existing algorithms either require additional redundancy to implement the constraints \citep{goldman_towards_2013,bornholt_dna-based_2016,grass_robust_2015,blawat_forward_2016,song_codes_2018,wang_construction_2019,nguyen_capacity-approaching_2021}, have high encoding and decoding complexity \citep{immink_properties_2020,schouhamer_immink_design_2018,nguyen_capacity-approaching_2021,liu_capacity-achieving_2022} or suffer from error propagation \citep{erlich_dna_2017,press_hedges_2020}.
Given these extensive research efforts, the objective of this paper is to evaluate the overall efficiency of homopolymer and GC content constrained coding for DNA data storage, with a focus on substitution errors.
In particular, we aim to determine whether a simple code design (unconstrained coding) can achieve code rates comparable to or higher than constrained coding.

\section{Homopolymers}
\label{Homopolymers}

In this section, we compare constrained coding to avoid homopolymers (called runs henceforth) with unconstrained coding. We find that unconstrained coding is more efficient in terms of achievable code rate, unless the increase in substitution rates for nucleotides in runs is very large.

We first describe the underlying channel model, and formally define constrained coding for the run-length as well as unconstrained coding. We then derive the achievable code rates for both coding schemes in the presence of substitution errors. The results are next used to determine at which increases in the substitution error rates constrained coding for the run-length is efficient.

\subsection{Notation and Preliminaries}

We consider the following channel model.

\begin{definition}\textbf{\emph{The run-length varying channel:}}
\label{RLV Channel}
A run-length varying channel maps an input sequence of nucleotides \(\mathbf{X} = X_1 X_2 \cdots X_n\), where each \(X_i\) is from the alphabet \(\{A, C, G, T\}\), to an output sequence of nucleotides \(\mathbf{Y} = Y_1 Y_2 \cdots Y_n\), where each \(Y_i\) is from the same alphabet. The sequence length is denoted by \(n\). The substitution probability \(p_r\) for a nucleotide \(X_i\) is determined by its run-length \(r\), which is the number of consecutive identical nucleotides to which \(X_i\) belongs. The substitution probabilities are symmetric across the nucleotide types, and \(p_r\) is non-decreasing as a function of the run-length, i.e., \(p_r \geq p_{r-1} \geq \cdots \geq p_1 = p\).
\end{definition}

In the run-length varying channel, the probability of substitution is determined by the run-length to which the transmitted nucleotide belongs. In the literature, channels whose error characteristics vary during transmission are also known as channels with random states \citep{gamal_network_2011}.

We define constrained and unconstrained coding for the run-length varying channel as follows.

\begin{definition}
\textbf{\emph{\(m\)-constrained and unconstrained coding:}}
\label{def:constrained_coding}
An \(m\)-constrained code for the run-length varying channel consists of the following:
\begin{itemize}
    \item A set of message indices \(\{1, 2, \ldots, M\}\).
    \item An encoding function \(f_m: \{1, 2, \ldots, M\} \to \mathcal{A}_m\) that maps message indices to codewords in \(\mathcal{A}_{m}\), the subset of all input sequences \(\mathbf{X} \in \{A, C, G, T\}^n\) whose maximum run-length is \(m\).
    \item A decoding function \(g: \{A, C, G, T\}^n \to \{1, 2, \ldots, M\}\) that maps each received output sequence back to a message index.
\end{itemize}
An unconstrained code for the run-length varying channel is an \(\infty\)-constrained code, wherein the encoding function \(f_\infty \triangleq f\) maps message indices to the set of all possible sequences \(\{A, C, G, T\}^n\).
\end{definition}


\subsection{Achievable code rates}
\label{cr}

We derive achievable code rates for \(m\)-constrained and unconstrained coding in an asymptotic regime using a random coding argument.  
The code rate and achievability of a code rate are defined as follows. 
\begin{definition}\textbf{\emph{Code rate:}}
\label{AC}
The code rate for a code $\mathcal{C}$ of size $M$ and length $n$ is defined as:
\begin{equation*}
    R=\frac{\log_2(M)}{n}.
\end{equation*}
For $m$-constrained coding, a code rate $R_c$ is said to be achievable if there exists a sequence of $m$-constrained codes $(\mathcal{C}_m^1, \mathcal{C}_m^2, \cdots)$, where each $\mathcal{C}_m^n \in \mathcal{A}_m$ and the maximum (over all codewords) probability of decoding error approaches zero as the sequence length \(n \to \infty\). Similarly, for unconstrained coding, a code rate $R_u$ is said to be achievable if there exists a sequence of unconstrained codes $(\mathcal{C}^1, \mathcal{C}^2, \ldots)$, where each $\mathcal{C}^n \in \{A,C,G,T\}^n$ and the maximum (over all codewords) probability of decoding error approaches zero as \(n \to \infty\).
\end{definition}

Studying the achievable code rate provides theoretical insight into the trade-off between higher error rates and run-length constraints, without considering practical limitations imposed by finite sequence lengths and specific error-correcting code constructions.

We state achievable code rates for the two coding schemes in Theorem~\ref{th1} and defer the proof to Appendix~\ref{pf1}. 

\begin{theorem}\textbf{\emph{Achievable code rates for \(m\)-constrained coding and unconstrained coding:}}
\label{th1}
Let \(H(p_r)\) be the entropy of a quaternary random variable that retains its state with probability \(1-p_r\) and substitutes to one of the other three states with probability \(p_r/3\):
\[
H(p_r) = - \left((1-p_r) \log_2(1-p_r) + p_r \log_2\left(\frac{p_r}{3}\right)\right).
\]
Define \( q(r) \) as the asymptotic probability that a random nucleotide \( X_i \) in a sequence \( \mathbf{X} \in \{A,C,G,T\}^n \) occurs in a run of length \( r \) and \( q_m(r) \) as the normalized probability for a sequence \( \mathbf{X} \in \mathcal{A}_m \). \\
The following statements hold. 
\begin{enumerate}
    \item For $m$-constrained coding, define $R_c$ as follows: 
\begin{equation}
\label{R_c}
R_c \triangleq H\left(P^{\mathbf{Y}}\right) - \sum_{r=1}^{m} q_m(r) H(p_r), \quad \text{with} \quad q_m(r) = \frac{q(r)}{\sum_{s=1}^m q(s)},
\end{equation}
where \(H\left(P^{\mathbf{Y}}\right) = \lim_{n \to \infty} \frac{1}{n} H\left(\mathbf{Y}\right)\) is the entropy rate of the stochastic process generating output sequences \(\mathbf{Y}\) and $q(r)$ is defined below. Then $R_c$ is an achievable code rate for $m$-constrained coding.
\item For unconstrained coding, define $R_u$ as follows:
\begin{equation}
\label{R_u}
R_u \triangleq 2 - \sum_{r=1}^{n} q(r) H(p_r), \quad \text{with} \quad q(r) = r\left(\frac{1}{4}\right)^{r-1} \left(\frac{3}{4}\right)^2.
\end{equation}
Then $R_u$ is an achievable code rate for unconstrained coding.
\end{enumerate}
\end{theorem}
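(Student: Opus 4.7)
The plan is a random-coding argument for the run-length varying channel, viewed as a channel with input-driven state, where the state at position $i$ is the run-length $r_i(\mathbf{X})$. The structural fact I would exploit is that, conditional on $\mathbf{X}$, the per-position noises are independent quaternary symmetric substitutions with parameters $p_{r_i(\mathbf{X})}$, so
\[
H(\mathbf{Y}\mid\mathbf{X}) \;=\; \sum_{i=1}^{n} H\!\left(p_{r_i(\mathbf{X})}\right).
\]
I would pick an ergodic input distribution tailored to each case, evaluate $\tfrac{1}{n}I(\mathbf{X};\mathbf{Y}) = \tfrac{1}{n}H(\mathbf{Y}) - \tfrac{1}{n}H(\mathbf{Y}\mid\mathbf{X})$ in the asymptotic limit, and invoke a joint-typicality channel coding theorem for stationary ergodic channels on the jointly stationary process $(X_i,Y_i)$.

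For Part~2 (unconstrained), I would draw codewords iid uniform on $\{A,C,G,T\}^n$. In such an input, runs form an iid $\mathrm{Geom}(3/4)$ renewal process with mean $4/3$, and the size-biased run-length at a typical interior position has distribution $q(r) = r(1/4)^{r-1}(3/4)^2$; Birkhoff's ergodic theorem applied to the bounded function $H(p_{r_1})$ then gives $\tfrac{1}{n}H(\mathbf{Y}\mid\mathbf{X}) \to \sum_r q(r) H(p_r)$. Combined with the output entropy rate $\tfrac{1}{n}H(\mathbf{Y}) \to 2$, which follows from the marginal uniformity of each $Y_i$ (a consequence of the channel's symmetry under relabelings of the alphabet together with the uniform input), this yields $\tfrac{1}{n}I(\mathbf{X};\mathbf{Y}) \to R_u$.

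For Part~1 (constrained), I would draw codewords from an ergodic source supported on $\mathcal{A}_m$ whose output process has entropy rate $H(P^{\mathbf{Y}})$; the natural candidate is iid uniform conditioned on all run-lengths being at most $m$, or equivalently the maximum-entropy Markov source on the subshift of finite type induced by the run-length-$m$ constraint. The induced run-length distribution at a typical interior position is the truncated renewal $q_m(r) = q(r)/\sum_{s=1}^{m} q(s)$, the same ergodic argument yields $\tfrac{1}{n}H(\mathbf{Y}\mid\mathbf{X}) \to \sum_{r=1}^{m} q_m(r) H(p_r)$, and $\tfrac{1}{n}H(\mathbf{Y}) \to H(P^{\mathbf{Y}})$ holds by the definition of the entropy rate, so $\tfrac{1}{n}I(\mathbf{X};\mathbf{Y}) \to R_c$.

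I expect the main obstacle to be the rigorous handling of the input-dependent memory. The conditional entropy-rate convergence is the routine direction, requiring only that the run-length state process be stationary ergodic under the chosen input so that Birkhoff applies. The output entropy rate is more subtle: the input-dependent noise imparts genuine memory to $\mathbf{Y}$, so for Part~2 one must argue carefully that the residual correlations between output symbols contribute only $o(n)$ to the joint entropy so the per-symbol output entropy saturates at $2$, and for Part~1 one must verify that the chosen input makes the joint process stationary ergodic so $H(P^{\mathbf{Y}})$ is well defined. The coding step itself requires a theorem for stationary ergodic channels (random coding with joint-typicality decoding on long blocks), rather than the DMC theorem, because the channel has memory through the run-length state.
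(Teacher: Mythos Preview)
Your proposal is essentially the same strategy as the paper's proof: random codebooks drawn from a stationary ergodic input process (i.i.d.\ uniform for the unconstrained case, an order-$m$ Markov chain for the constrained case), joint-typicality decoding justified via Shannon--McMillan--Breiman, and the mutual-information-rate decomposition $I(P^{\mathbf{X}};P^{\mathbf{Y}}) = H(P^{\mathbf{Y}}) - H(P^{\mathbf{Y}}\mid P^{\mathbf{X}})$ evaluated by introducing the run-length $R_i$ as an input-determined state. The paper likewise conditions on $\mathbf{R}$ (which is a function of $\mathbf{X}$), uses the conditional independence of $Y_i$ given $(X_i,R_i)$, and derives the asymptotic size-biased run-length distribution $q(r)=r(1/4)^{r-1}(3/4)^2$ (and its truncated normalization $q_m$) in a separate lemma.

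Two minor differences worth noting. First, the paper's constrained input is not the maxentropy (Shannon--Parry) measure on the run-length-$m$ subshift but a simpler order-$m$ chain that picks uniformly among all four nucleotides except when the last $m$ symbols coincide; the resulting $H(P^{\mathbf{Y}})$ is left as an MCMC-estimated quantity, so the theorem's $R_c$ is tied to that specific input law rather than an optimized one. Second, you correctly flag the output entropy rate as the delicate step; the paper dispatches the unconstrained case in one line by invoking the channel's symmetry to declare each $Y_i$ marginally uniform and hence $H(P^{\mathbf{Y}})=2$, without explicitly arguing away the inter-symbol correlations you mention. Your instinct that this deserves more care is sound, but the overall architecture of your argument matches the paper's.
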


\subsection{Achievable code rates for different substitution rate increases }
\label{lbr}

Theorem~\ref{th1} specifies the achievable code rates for $m$-constrained and unconstrained coding as a function of the substitution probabilities $p_{r}$. 
Whether $m$-constrained or unconstrained coding is more efficient depends on how the $p_{r}$'s increase as a function of run-length $r$, which is a property of the channel that varies between DNA storage systems. We consider different values of $p_{r} \geq \ldots \geq p_{2} \geq p_{1}=p$ to determine the error regimes in which $m$-constrained coding achieves a higher achievable code rate than unconstrained coding. 
If the substitution probabilities do not increase as a function of run-length, unconstrained coding is clearly preferred. 
In Section~\ref{emp}, we discuss how error rates (substitutions, deletions, and insertions) increase as a function of run-length in existing DNA storage systems.

Let us consider a linear growth model for the substitution rate $p_{r}$:
\begin{align*}
 p_{r} = \min \left(0.75, \alpha \left(r-1\right)+p\right),
  \end{align*}
where $r \in \mathbb{N}$, $\alpha \geq 0$ is a growth factor and $p=p_1$ is a base substitution rate set to $1\%$, consistent with the substitution probabilities of current DNA storage systems that lie between $0.08\%$ and $2.6\%$ \citep{grass_robust_2015, erlich_dna_2017, goldman_towards_2013, antkowiak_low_2020}.
Note that the worst-case substitution rate is $0.75$. If $p_{r}=0.75$, the least amount of information is available (the entropy is maximized), as each nucleotide could be present with equal probability regardless of the observed channel output.

Figure~\ref{fig3} shows the error regimes in which $m$-constrained coding is more efficient than unconstrained coding and vice versa. The error regimes are color-coded based on the associated achievable code rate difference~$R_u-R_c$. The orange-shaded region correspond to the values of $p_{r}$ for which $m$-constrained coding is more efficient. Conversely, the purple-shaded region correspond to the values of $p_{r}$ for which unconstrained coding is more efficient. The gray line indicates similar performance between the two coding schemes.

\begin{figure}[t]
    \centering
\includegraphics[scale=1]{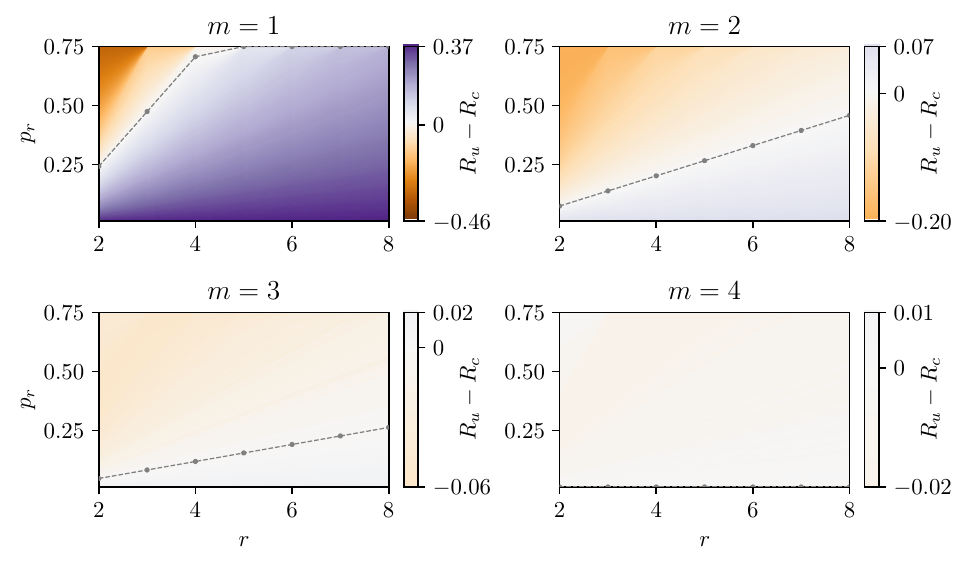}
\captionsetup{labelfont=bf}
    \caption{Error regimes in which $m$-constrained and unconstrained coding achieve a larger code rate. The error regimes are color-coded based on the associated achievable code rate difference~$R_u-R_c$, where the gray line indicates similar performances. }
    \label{fig3}
\end{figure}

Figure~\ref{fig3} indicates that the increase in the substitution rate for nucleotides in runs must be very large for $m$-constrained coding to be efficient. For example, $1$-constrained coding becomes efficient for substitution probabilities \( p_{r>4}=0.75 \), i.e., when nucleotides in runs longer than four are maximally error-prone. Similarly, $3$-constrained coding becomes efficient when nucleotides in a run of length five have a substitution rate fifteen times that of nucleotides in a run of length one.

Weaker constraints require less redundancy, and $m$-constrained coding becomes efficient at smaller increases in the substitution rate. However, the weaker the constraint, the smaller the code rate difference~$R_u-R_c$. Thus, the gain in achievable code rate in the error regimes where $m$-constrained coding is efficient must be weighed against the higher complexity and optimality of the code design. For example, when the maximum run-length is $m=3$ or $m=4$, the gain in achievable code rate is almost negligible, even in the error regimes where $m$-constrained coding is more efficient.

When \( m > 4 \), the entropy of the output process \( H(P^{\mathbf{Y}}) \) for \( m \)-constrained coding approaches two, which is the entropy of the output process for unconstrained coding. Additionally, the probability of reading a sequence with runs longer than four is approximately zero. As a result, the achievable code rates for $m$-constrained and unconstrained coding are approximately equal. In such scenarios, unconstrained coding is preferred due to its simpler code design.

In this section, we discussed linear substitution rate increases. We discuss other growth models and associated error regimes in Appendix~\ref{App C}.

\section{GC-Content}
\label{GC-Content}

In addition to constrained coding for the run-length, many papers propose code designs that balance the GC content of the DNA sequences \citep{chee_linear-time_2019, king_bounds_2003, eidin_constant_2018, cai_correcting_2021}. In this section, we compare constrained coding for the GC content with unconstrained coding in the presence of substitution errors. We find that the differences between the two coding schemes in terms of Gilbert-Varshamov code rate lower bounds are marginal for common substitution error rates and sequence lengths.

Following the approach of the previous section, we first introduce our channel model and formally define constrained coding for the GC content as well as unconstrained coding. We then give Gilbert-Varshamov code rate lower bounds for the two coding schemes. Next, we use the results to determine the substitution error rate increases at which constrained coding for the GC content achieves a larger Gilbert-Varshamov code rate lower bound than unconstrained coding.

\subsection{Notation and preliminaries}


We consider the following channel model.

\begin{definition}\textbf{\emph{The GC content channel:}}
\label{GC Channel}
A GC content channel maps an input sequence of nucleotides \(\mathbf{X} = X_1X_2 \cdots X_n\), where each \(X_i\) is from the alphabet \(\{A, C, G, T\}\), to an output sequence of nucleotides \(\mathbf{Y} = Y_1Y_2 \cdots Y_n\), where each \(Y_i\) is from the same alphabet. The sequence length is denoted by \(n\). The substitution probabilitiy \(p_w\) for any nucleotide \(X_i\) is determined by the sequence's GC content \(w\), defined as the number of nucleotides that are either \(G\) or \(C\) in input sequence \(\mathbf{X}\), with \(0 \leq w \leq n\). The substitution probabilities are symmetric across the nucleotide types, and \(p_w\) is non-decreasing as a function of the imbalance in GC content, satisfying \(p_0 \geq \cdots \geq p_{\lfloor n/2 \rfloor} = p \leq \cdots \leq p_n\).
\end{definition}

In the GC content channel, the substitution probability for each nucleotide is constant during the transmission of a sequence, but varies for different input sequences. In the literature, channels whose error characteristics can vary between transmissions, but are constant for each transmission, are also known as compound channels \citep{gamal_network_2011}.

We define constrained and unconstrained coding for the GC content channel as follows.

\begin{definition}\textbf{\emph{\(\epsilon\)-constrained and unconstrained coding:}}
\label{def:1}
An \(\epsilon\)-constrained code for the GC content channel consists of the following:
\begin{itemize}
  \item A set of message indices \(\{1, 2, \ldots, M\}\).
  \item An encoding function \(f_{\epsilon}: \{1, 2, \ldots, M\} \to \mathcal{S}_{\epsilon}\) that maps message indices to the subset $\mathcal{S}_{\epsilon}$ of all input sequences \( \mathbf{X} \in \{A, C, G, T\}^n\) that have a GC content \( w \) satisfying \( \lceil (0.5 - \epsilon)n \rceil \leq w \leq \lfloor (0.5 + \epsilon)n \rfloor \).
  \item A decoding function \(g: \{A, C, G, T\}^n \to \{1, 2, \ldots, M\}\) that maps each received output sequence back to a message index.
\end{itemize}
An unconstrained code for the GC content channel is an \((0.5, n)\)-constrained code, wherein the encoding function $f_{0.5}\triangleq f$ maps message indices uniformly and independently to the set of all possible sequences \( \{A, C, G, T\}^n\).
\end{definition}

In an asymptotic regime where \(n \to \infty\), the GC content of random sequences stabilizes at 50\%. Thus, \(\epsilon\)-constrained and unconstrained coding become approximately equal, and a comparison between them is meaningless. Instead, we focus on the finite-length regime and compare the coding schemes using Gilbert-Varshamov code rate lower bounds. Alternatively, one could study \(\epsilon\)-constrained and unconstrained coding in an asymptotic regime and consider local rather than global GC content constraints.

\subsection{Gilbert-Varshamov code rate lower bounds}
\label{crGC}
We analyze the Gilbert-Varshamov bound to derive code rate lower bounds for \(\epsilon\)-constrained and unconstrained coding. Recall Definition \ref{AC} of the code rate: 
\[
R = \frac{1}{n} \log_2 M_q(n, d),
\]
where the size \( M \) of the code is now a function of sequence length \( n \), minimum Hamming distance \( d \), and alphabet size \( q \). The minimum Hamming distance \( d \) is the smallest number of positions at which any two codewords in the code can differ, and thus determines the number of errors the code can correct.

The Gilbert-Varshamov bound provides a theoretical lower bound on \(M_q(n, d)\). For \(\epsilon\)-constrained coding, we extend this bound to additionally consider the constraint \(\epsilon\), thus providing a theoretical lower bound on the maximum number \(M_q(\epsilon,n,d)\) of distinct codewords an \(\epsilon\)-constrained code can contain.

For unconstrained coding, the set of all possible sequences is the entire space $\{A,C,G,T\}^n$, and the maximum number of distinct codewords is not constrained by \(\epsilon=0.5\), such that \(M_q(0.5, n,d)=M_q(n,d)\). The Gilbert-Varshamov bound for unconstrained coding is given by the ratio of the total number of possible sequences to the volume of a Hamming ball of radius \(d-1\), as stated in the following Theorem \ref{th:2}.

\begin{theorem} \textbf{\emph{Gilbert-Varshamov bound for unconstrained coding \citep{marcus_introduction_2001}:}}
\label{th:2}
The maximum size, \(M_q(n,d)\), of an unconstrained code of length \(n\), minimum Hamming distance \(d\) (where \(0 \leq d \leq n\)), and alphabet size \(q=4\), satisfies:
\begin{equation}
    M_q(n,d) \geq \frac{4^n}{\sum_{i = 0}^{d - 1} \binom{n}{i} 3^i},
    \label{GV_u}
\end{equation}
where the denominator is the Hamming ball volume, defined as the number of sequences \(\mathbf{X}' \in \{A, C, G, T\}^n\) within a Hamming distance \(d(\mathbf{X}, \mathbf{X}') \leq d-1\) from any center sequence \(\mathbf{X} \in \{A, C, G, T\}^n\).

This results in the following code rate lower bound for unconstrained coding:
\begin{equation*}
    R_u \geq R^l_u \triangleq 2 - \frac{1}{n} \log_2 \left( a \right),
\end{equation*}
where \(a=\sum_{i = 0}^{d - 1} \binom{n}{i} 3^i\).
\end{theorem}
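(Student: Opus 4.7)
The plan is to establish the Gilbert-Varshamov bound via the classical greedy (sphere-packing/covering) argument, then derive the code rate lower bound by elementary algebraic manipulation. Since the statement concerns unconstrained codes over $\{A,C,G,T\}^n$ with minimum Hamming distance $d$, there is no GC content restriction to track, so the argument is the standard quaternary Gilbert-Varshamov construction.

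First I would compute the Hamming ball volume. For a fixed center $\mathbf{X} \in \{A,C,G,T\}^n$, the ball $B(\mathbf{X}, d-1) = \{\mathbf{X}' : d(\mathbf{X}, \mathbf{X}') \leq d-1\}$ decomposes as a disjoint union over the exact Hamming distance $i = 0, 1, \ldots, d-1$. For each $i$, the number of sequences at distance exactly $i$ is $\binom{n}{i} 3^i$: choose which $i$ of the $n$ coordinates are changed, then substitute each such coordinate with one of the $q-1 = 3$ other alphabet symbols. Thus $|B(\mathbf{X}, d-1)| = \sum_{i=0}^{d-1} \binom{n}{i} 3^i =: a$, independent of the center $\mathbf{X}$.

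Second, I would construct the code greedily. Start with $\mathcal{C}_0 = \emptyset$. At step $k \geq 0$, if there exists a sequence $\mathbf{X} \in \{A,C,G,T\}^n$ whose Hamming distance to every codeword already in $\mathcal{C}_k$ is at least $d$, then pick such an $\mathbf{X}$ and set $\mathcal{C}_{k+1} = \mathcal{C}_k \cup \{\mathbf{X}\}$. Otherwise, stop. The stopping condition is exactly that $\bigcup_{\mathbf{X} \in \mathcal{C}_k} B(\mathbf{X}, d-1) = \{A,C,G,T\}^n$, because any sequence not covered by some ball $B(\mathbf{X}, d-1)$ is at distance at least $d$ from all $\mathbf{X} \in \mathcal{C}_k$ and could have been added. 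Taking the final code $\mathcal{C}$ of size $M$, the union bound gives
\begin{equation*}
4^n = \left|\bigcup_{\mathbf{X} \in \mathcal{C}} B(\mathbf{X}, d-1)\right| \leq \sum_{\mathbf{X} \in \mathcal{C}} |B(\mathbf{X}, d-1)| = M \cdot a,
\end{equation*}
which rearranges to $M \geq 4^n/a$, establishing \eqref{GV_u}. By construction every pair of codewords in $\mathcal{C}$ is at Hamming distance at least $d$, so $\mathcal{C}$ is a valid code, giving $M_q(n,d) \geq M \geq 4^n/a$.

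Finally, I would derive the code rate bound. By Definition \ref{AC}, $R_u = \frac{1}{n}\log_2 M_q(n,d)$, and applying $\frac{1}{n}\log_2(\cdot)$ to the inequality $M_q(n,d) \geq 4^n/a$ yields
\begin{equation*}
R_u \geq \frac{1}{n}\log_2\!\left(\frac{4^n}{a}\right) = \frac{1}{n}\bigl(n \log_2 4 - \log_2 a\bigr) = 2 - \frac{1}{n}\log_2(a),
\end{equation*}
which is the stated $R_u^l$. There is no real obstacle here: both steps are textbook. The only thing I would be careful about is verifying that the greedy process terminates in finitely many steps (which it does because $\{A,C,G,T\}^n$ is finite) and that $|B(\mathbf{X}, d-1)|$ does not depend on the center, so the union bound gives exactly $M \cdot a$ rather than a more awkward expression.
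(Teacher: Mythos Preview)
Your argument is correct and is exactly the standard textbook proof of the Gilbert--Varshamov bound. Note that the paper does not actually prove this theorem: it is stated as a known result with a citation to \citep{marcus_introduction_2001}, so there is no in-paper proof to compare against; your greedy covering argument is the classical one underlying that reference.
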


In unconstrained coding, the Hamming ball volume is independent of the center sequence (the denominator of Equation \eqref{GV_u} does not depend on sequence \(\mathbf{X}\)). In contrast, for most constrained codes, the Hamming ball volume varies with the center sequence. \citet{gu_generalized_1993} address this and show that the Gilbert-Varshamov bound for constrained codes is calculated as the ratio of the number of sequences that satisfy the constraint to the average Hamming ball volume across all sequences in the constrained space.

\citet{king_bounds_2004} derives an expression for the Gilbert-Varshamov bound for codes with constant GC content. To analyze the Gilbert-Varshamov bound for \(\epsilon\)-constrained coding, we extend the result of \citet{king_bounds_2004} to codes whose GC content can vary within a predefined range determined by the \(\epsilon\) constraint. We state the expression for the Hamming ball volume in constrained spaces \(\mathcal{S}_{\epsilon}\) in Lemma~\ref{HB} and defer the proof to Appendix~\ref{pf2}.

\begin{lemma} \textbf{\emph{Hamming ball volume in constrained space \(\mathcal{S}_{\epsilon}\):}}
\label{HB}
Define \(V_{\epsilon}(\mathbf{X}) = \{\mathbf{X}' \in \mathcal{S}_{\epsilon} : d(\mathbf{X},\mathbf{X}') \leq d - 1\}\) as the Hamming ball centered at sequence \(\mathbf{X} \in \mathcal{S}_{\epsilon}\). The volume of \(V_{\epsilon}(\mathbf{X})\) is:
\begin{equation}
\label{eq:4}
|V_{\epsilon}(\mathbf{X})| = \sum_{r=0}^{d-1} \sum_{\Delta=\max(\lceil(0.5 - \epsilon)n\rceil\! - w, -r)}^{\min(\lfloor(0.5 + \epsilon)n\rfloor\! - w, r)} \sum_{i_+=\max(0, \Delta)}^{\min(\Delta + w, r)} \binom{w}{i_+\! -\! \Delta} \binom{n-w}{i_+} \binom{n-2i_+\! +\! \Delta}{r - 2i_+\! +\! \Delta} 2^{2i_+\! -\! \Delta}.
\end{equation}
\end{lemma}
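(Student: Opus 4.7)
The plan is to count sequences $\mathbf{X}' \in \mathcal{S}_\epsilon$ within Hamming distance $d-1$ of a fixed center $\mathbf{X}$ with GC content $w$, by stratifying according to the Hamming distance $r$ between $\mathbf{X}'$ and $\mathbf{X}$, the net change $\Delta = w(\mathbf{X}') - w$ in GC content, and a single auxiliary index that records how substitutions split into the four qualitatively different types. The outer sum over $r \in \{0, \ldots, d-1\}$ is then immediate, so the real content of the lemma is fixing $r$ and counting sequences at Hamming distance exactly $r$ whose GC content lies in the admissible window.

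Given $r$, I would partition the $r$ substituted positions into four classes according to whether the substituted nucleotide changes GC-type: let $i_-$ be the number of GC $\to$ AT substitutions, $i_+$ the number of AT $\to$ GC substitutions, $j_{GC}$ the number of GC $\to$ GC substitutions (swap within $\{G,C\}$), and $j_{AT}$ the number of AT $\to$ AT substitutions. Then $i_- + i_+ + j_{GC} + j_{AT} = r$ and the GC-content shift is $\Delta = i_+ - i_-$, so $i_- = i_+ - \Delta$. The GC-content constraint $\mathbf{X}' \in \mathcal{S}_\epsilon$ translates into $\max(\lceil(0.5-\epsilon)n\rceil - w, -r) \le \Delta \le \min(\lfloor(0.5+\epsilon)n\rfloor - w, r)$, which yields exactly the outer range of $\Delta$ in the statement.

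For fixed $(r, \Delta, i_+)$ I would count as follows. The $i_-$ GC-to-AT substitutions occupy $\binom{w}{i_+-\Delta}$ position sets, each with $2$ letter choices (A or T), contributing $\binom{w}{i_+-\Delta}\,2^{i_+-\Delta}$. The $i_+$ AT-to-GC substitutions contribute $\binom{n-w}{i_+}\,2^{i_+}$. For the remaining $r - i_+ - i_- = r - 2i_+ + \Delta$ type-preserving substitutions, the letter is forced (one choice), and the positions must be chosen from the $w - i_-$ untouched GC positions and the $n - w - i_+$ untouched AT positions. Letting $j_{GC}$ range gives
\[
\sum_{j_{GC}} \binom{w-i_-}{j_{GC}}\binom{n-w-i_+}{r-i_+-i_- - j_{GC}} = \binom{n - i_+ - i_-}{r - i_+ - i_-} = \binom{n-2i_++\Delta}{r-2i_++\Delta}
\]
by Vandermonde's identity. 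Multiplying the four factors and collapsing $2^{i_+-\Delta}\cdot 2^{i_+} = 2^{2i_+-\Delta}$ produces the summand, and summing over the admissible $i_+$ and $\Delta$ yields the claimed expression.

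The only non-routine step is the Vandermonde collapse, which removes the need for an extra summation index; the rest is bookkeeping. The main obstacle is verifying the sum ranges, particularly the upper bound $\min(\Delta + w, r)$ on $i_+$ (coming from $i_- \le w$ and the trivial bound $i_+ \le r$) and the lower bound $\max(0, \Delta)$ (coming from $i_+ \ge 0$ and $i_- \ge 0$), while observing that the remaining implicit constraints ($i_+ \le n-w$ and $2i_+ - \Delta \le r$) are automatically enforced by the vanishing of $\binom{n-w}{i_+}$ and $\binom{n-2i_++\Delta}{r-2i_++\Delta}$ outside their natural ranges. I would close the argument by noting that since we counted sequences at exactly distance $r$, summing over $r = 0, \ldots, d-1$ gives $|V_\epsilon(\mathbf{X})|$.
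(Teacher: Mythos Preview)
Your argument is correct and follows essentially the same route as the paper: stratify by the exact Hamming distance $r$, the net GC-shift $\Delta=i_+-i_-$, and then count positions and letter choices, with out-of-range constraints absorbed by vanishing binomials. The only cosmetic difference is that you split the GC-preserving substitutions into $j_{GC}$ and $j_{AT}$ and then collapse them via Vandermonde, whereas the paper selects the $r-2i_++\Delta$ preserving positions directly from the combined pool of $n-2i_++\Delta$ untouched positions; these are the same count.
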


Next, we use the expression for the Hamming ball volume in constrained spaces \(\mathcal{S}_{\epsilon}\) to derive a Gilbert-Varshamov bound for \(\epsilon\)-constrained coding in Theorem~\ref{th:3}. Note that Equation~\eqref{eq:4} only depends on GC content \(w\), not on the specific center sequence \(\mathbf{X}\). We abuse notation and write \(V_{\epsilon}(w)\) for the Hamming ball centered at a sequence \(\mathbf{X} \in \mathcal{S}_{\epsilon}\) with GC content \(w\). Moreover, we write \(w \in \mathcal{S}_{\epsilon}\) to denote GC contents \(w\) that satisfy the set constraint \( \lceil (0.5 - \epsilon)n \rceil \leq w \leq \lfloor (0.5 + \epsilon)n \rfloor\).

\begin{theorem}\textbf{\emph{Gilbert-Varshamov bound for $\epsilon$-constrained coding:}}
\label{th:3}
The maximum size, \(M_q(\epsilon, n,d)\), of an \(\epsilon\)-constrained code of length \(n\), minimum Hamming distance \(d\) (where \(0 \leq d \leq n\)), and alphabet size \(q = 4\), satisfies:
\begin{equation}
\label{GV}   
M_q(\epsilon, n,d) \geq \frac{\sum_{w \in \mathcal{S}_{\epsilon}} \binom{n}{w} 2^n}{\sum_{w \in \mathcal{S}_{\epsilon}} q_{\epsilon}(w) |V_{\epsilon}(w)|},
\end{equation}
where the probability \(q_{\epsilon}(w)\) is the proportion of sequences in \(\mathcal{S}_{\epsilon}\) with GC content \(w\), for which an expression is given in Lemma~\ref{GC distr}. An expression for the Hamming ball volume \(|V_{\epsilon}(w)|\) is given in Lemma~\ref{HB}.
This results in the following code rate lower bound for \(\epsilon\)-constrained coding:
\begin{equation*}
  R_c \geq R^l_c \triangleq 1 - \frac{\log_2(a)}{n} + \frac{\log_2(b)}{n}, 
\end{equation*}
where \(a = \sum_{w \in \mathcal{S}_{\epsilon} }q_{\epsilon}(w) |V_{\epsilon}(w)|\) and  \(b=\sum_{w \in \mathcal{S}_{\epsilon}} \binom{n}{w} 2^n\).
\end{theorem}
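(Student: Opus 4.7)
The plan is to follow the generalized Gilbert--Varshamov framework of \citet{gu_generalized_1993}, specialized to the constrained set $\mathcal{S}_\epsilon$. The key structural fact, already supplied by Lemma~\ref{HB}, is that the Hamming ball volume $|V_\epsilon(\mathbf{X})|$ depends on $\mathbf{X}$ only through its GC content $w$; this symmetry is what ultimately lets one collapse an average over all sequences in $\mathcal{S}_\epsilon$ into the $q_\epsilon$-weighted sum over GC-content classes appearing in the denominator of \eqref{GV}.

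First I would set up the conflict graph $G$ with vertex set $\mathcal{S}_\epsilon$ and edges joining distinct pairs $\mathbf{X},\mathbf{X}'$ with $d(\mathbf{X},\mathbf{X}') \leq d-1$. An $\epsilon$-constrained code of minimum Hamming distance $d$ is then precisely an independent set in $G$, so $M_q(\epsilon,n,d) = \alpha(G)$. Because the closed $G$-neighborhood of $\mathbf{X}$ is exactly $V_\epsilon(\mathbf{X})$ (the Hamming ball restricted to $\mathcal{S}_\epsilon$ and containing $\mathbf{X}$ itself), the Caro--Wei bound yields $\alpha(G) \geq \sum_{\mathbf{X} \in \mathcal{S}_\epsilon} 1/|V_\epsilon(\mathbf{X})|$. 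For self-containedness I would include the standard one-line probabilistic justification: under a uniformly random permutation of $\mathcal{S}_\epsilon$, retaining each vertex that precedes all of its $G$-neighbors produces an independent set of expected size $\sum_{\mathbf{X}} 1/|V_\epsilon(\mathbf{X})|$.

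Next I would apply the AM--HM (Jensen) inequality to convert the sum of reciprocals into the reciprocal of a mean: $\sum_{\mathbf{X} \in \mathcal{S}_\epsilon} 1/|V_\epsilon(\mathbf{X})| \geq |\mathcal{S}_\epsilon|^2 / \sum_{\mathbf{X} \in \mathcal{S}_\epsilon} |V_\epsilon(\mathbf{X})|$. Since $|V_\epsilon(\mathbf{X})|$ depends only on $w(\mathbf{X})$, and since the number of length-$n$ sequences with GC content $w$ is $\binom{n}{w} 2^n$ (choose the $w$ GC-positions, then two binary choices at every position: G vs.\ C on those $w$ slots and A vs.\ T on the remaining $n-w$), both quantities collapse into GC-content sums: $|\mathcal{S}_\epsilon| = \sum_{w \in \mathcal{S}_\epsilon} \binom{n}{w} 2^n$ and $\sum_{\mathbf{X} \in \mathcal{S}_\epsilon} |V_\epsilon(\mathbf{X})| = \sum_{w \in \mathcal{S}_\epsilon} \binom{n}{w} 2^n |V_\epsilon(w)| = |\mathcal{S}_\epsilon| \sum_w q_\epsilon(w) |V_\epsilon(w)|$, where $q_\epsilon(w) = \binom{n}{w} 2^n / |\mathcal{S}_\epsilon|$ comes from Lemma~\ref{GC distr}. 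Substituting these identities into the Caro--Wei/Jensen inequality yields exactly \eqref{GV}, and the code rate expression $R_c^l$ follows immediately by taking $n^{-1}\log_2(\cdot)$ of both sides.

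The main obstacle is not technical depth but picking the right averaging tool: the naive maximal-code argument — every $\mathbf{X} \in \mathcal{S}_\epsilon$ lies within $d-1$ of some codeword of any maximal code — only yields the weaker bound $M_q(\epsilon,n,d) \geq |\mathcal{S}_\epsilon|/\max_w |V_\epsilon(w)|$, with the \emph{maximum} rather than the $q_\epsilon$-weighted \emph{average} in the denominator. Upgrading to the mean requires the Caro--Wei step, and exploiting the $w$-only dependence from Lemma~\ref{HB} is what makes the $\binom{n}{w} 2^n$ enumeration factor through so that the uniform average over $\mathcal{S}_\epsilon$ reduces to the $q_\epsilon$-weighted sum. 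The remaining steps are routine enumeration and algebra.
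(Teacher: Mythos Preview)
Your proposal is correct and follows essentially the same approach as the paper: both invoke the generalized Gilbert--Varshamov bound of \citet{gu_generalized_1993} to obtain $M_q(\epsilon,n,d) \geq |\mathcal{S}_\epsilon|/\overline{V}$, then use the $w$-only dependence of $|V_\epsilon(\mathbf{X})|$ from Lemma~\ref{HB} together with the enumeration $|\{\mathbf{X}: w(\mathbf{X})=w\}| = \binom{n}{w}2^n$ to rewrite numerator and denominator. The only difference is that the paper cites Gu--Varanasi as a black box, whereas you unpack it via Caro--Wei plus AM--HM, which is precisely the standard proof of that result.
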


\begin{proof}
Equation~\eqref{GV} follows from the result by \citet{gu_generalized_1993}, who show that the Gilbert-Varshamov bound for constrained codes is the ratio of the number of sequences that satisfy the constraint to the average Hamming ball volume. The numerator is the size of constrained space \( \mathcal{S}_{\epsilon} \), obtained by summing over all sequences where the GC content \( w \) is within the range determined by the $\epsilon$ constraint. The coefficient \( \binom{n}{w} \) calculates the number of ways to select \( w \) positions from \( n \) available positions for the nucleotides \(\{G, C\}\). The factor \( 2^n \) accounts for all binary choices at both the selected \(\{G, C\}\) positions and the remaining \(\{A, T\}\) positions. The denominator is the average Hamming ball volume \( |\overline{V}| = \sum_{\mathbf{X} \in \mathcal{S}_{\epsilon}} |V_{\epsilon}(\mathbf{X})|/|\mathcal{S}_{\epsilon}| \). An expression for $|V_{\epsilon}(\mathbf{X})|$ is given in Lemma \ref{HB}. The Hamming ball volume \( |V_{\epsilon}(\mathbf{X})| \) depends only on the GC content \( w \) of the center sequence, allowing us to rewrite the average Hamming ball volume as a weighted sum over all permissible GC contents \( w \), where the weights \( q_{\epsilon}(w) \) are the proportions of sequences with GC content \( w \) in constrained space \( \mathcal{S}_{\epsilon} \), for which an expression is given in Lemma~\ref{GC distr}.
\end{proof}

\subsection{Code rate lower bounds for different substitution rate increases}

\label{lbc}
Theorem~\ref{th:3} states the Gilbert-Varshamov code rate lower bounds for \(\epsilon\)-constrained and unconstrained coding as a function of the minimum Hamming distance $d$. To compare \(\epsilon\)-constrained with unconstrained coding, we calculate the code rate lower bounds for minimum Hamming distances $d$ corresponding to the expected number of errors for each coding scheme.

The expected number of errors depends on how the substitution probabilities $p_w$ increase with imbalances in GC content, which is a property of the DNA storage system. Therefore, we consider different values of $p_{0} \geq \cdots \geq p_{\lfloor n/2 \rfloor}=p \leq \cdots \leq p_{n}$. In Section~\ref{emp}, we discuss how substitution, deletion, and insertion rates correlate with imbalances in GC content in existing DNA storage systems.

We consider a parabolic growth model for the substitution rates $p_{w}$:
\begin{align*}
    p_{w}= \min \left(0.75, \alpha\left(\frac{w}{n}-0.5\right)^2+p \right),
\end{align*}
where $0\leq w\leq n$, $\alpha>0$ is a growth factor and $p=p_{\lfloor n/2 \rfloor}$ is a base error probability set to $1\%$. We select a parabolic growth model because it represents the worst-case scenario for unconstrained coding, wherein both low and high GC contents are associated with higher substitution error rates. The maximum substitution rate is $0.75$ because, at this rate, the output is statistically independent of the input.

The expected number of substitution errors for each coding scheme is given by $\bar{p}n$, where $\bar{p}$ is the average substitution probability. For \(\epsilon\)-constrained coding, $\bar{p}$ is computed as the weighted sum $\bar{p} = \sum q_{\epsilon}(w)p_w$, where the weights $q_{\epsilon}(w)$ are the proportion of sequences with GC content $w$ in constrained space \(\mathcal{S}_{\epsilon}\). Similarly, for unconstrained coding, $\bar{p}$ is computed as $\bar{p} = \sum q(w)p_w$, where $q_{0.5}(w)\triangleq q(w)$ is the proportion of sequences with GC content $w$ from all possible sequences $\{A,C,G,T\}^n$. The probability distributions $q_{\epsilon}(w)$ and $q(w)$ are given in the following Lemma \ref{GC distr}.

\begin{lemma} \textbf{\emph{Proportion of sequences with GC content $w$:}}
\label{GC distr}
The proportion \(q_{\epsilon}(w)\) of sequences with GC content \(w\) within constrained space \(\mathcal{S}_{\epsilon}\) is:
\begin{equation*}
    q_{\epsilon}(w) = \frac{q(w)}{\sum_{s \in \mathcal{S}_{\epsilon}} q(s)},
\end{equation*}
where \(q(w)\) represents the proportion of sequences with GC content \(w\) in the total sequence space \(\{A,C,G,T\}^n\), given by:
\begin{equation*}
   q(w) = \frac{1}{2^n} \binom{n}{w}.
\end{equation*}
\end{lemma}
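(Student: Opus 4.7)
The plan is a direct counting argument. First I would compute $q(w)$, the proportion of sequences with GC content $w$ in the full space $\{A,C,G,T\}^n$. A sequence has GC content $w$ iff exactly $w$ of its $n$ positions are filled with a symbol from $\{G,C\}$ and the remaining $n-w$ positions with a symbol from $\{A,T\}$. There are $\binom{n}{w}$ ways to choose which positions carry G or C, then $2^{w}$ binary choices (G vs.\ C) at those positions and $2^{n-w}$ binary choices (A vs.\ T) at the remaining ones, giving $\binom{n}{w}\,2^{w}\,2^{n-w} = \binom{n}{w}\,2^{n}$ sequences. Dividing by the total $|\{A,C,G,T\}^n| = 4^{n} = 2^{2n}$ yields
\begin{equation*}
q(w) \;=\; \frac{\binom{n}{w}\,2^{n}}{2^{2n}} \;=\; \frac{1}{2^{n}}\binom{n}{w},
\end{equation*}
which is the second claim of the lemma.

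For the first claim, I would compute $q_{\epsilon}(w)$ by conditioning on membership in $\mathcal{S}_{\epsilon}$. By definition of $\mathcal{S}_{\epsilon}$, a sequence with GC content $w$ lies in $\mathcal{S}_{\epsilon}$ iff $\lceil(0.5-\epsilon)n\rceil \le w \le \lfloor(0.5+\epsilon)n\rfloor$, and its cardinality is $|\mathcal{S}_{\epsilon}| = \sum_{s \in \mathcal{S}_{\epsilon}} \binom{n}{s}\,2^{n}$ by the same counting argument applied to each admissible $s$. Hence the proportion of sequences in $\mathcal{S}_{\epsilon}$ having GC content $w$ is
\begin{equation*}
q_{\epsilon}(w) \;=\; \frac{\binom{n}{w}\,2^{n}}{\sum_{s \in \mathcal{S}_{\epsilon}} \binom{n}{s}\,2^{n}} \;=\; \frac{\binom{n}{w}}{\sum_{s \in \mathcal{S}_{\epsilon}} \binom{n}{s}}.
\end{equation*}

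Finally, I would observe that multiplying numerator and denominator of the last expression by $1/2^{n}$ gives
\begin{equation*}
q_{\epsilon}(w) \;=\; \frac{\binom{n}{w}/2^{n}}{\sum_{s \in \mathcal{S}_{\epsilon}} \binom{n}{s}/2^{n}} \;=\; \frac{q(w)}{\sum_{s \in \mathcal{S}_{\epsilon}} q(s)},
\end{equation*}
which is exactly the claimed formula. There is no real obstacle: the lemma is an elementary counting identity, essentially the statement that restricting a uniform distribution on $\{A,C,G,T\}^{n}$ to the event $\{w \in \mathcal{S}_{\epsilon}\}$ renormalizes the GC-content distribution by the total probability of that event. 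The only minor care needed is to ensure that when summing $\binom{n}{s}\,2^{n}$ over $s \in \mathcal{S}_{\epsilon}$ one is indeed counting each sequence once, which follows because the GC contents $\{w : 0 \le w \le n\}$ partition $\{A,C,G,T\}^{n}$.
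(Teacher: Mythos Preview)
Your proposal is correct and follows the same direct counting approach as the paper's proof. In fact, your argument is more careful: the paper's proof states the number of sequences with GC content $w$ as $2^w\binom{n}{w}$ (a slip; it should be $2^n\binom{n}{w}$), whereas you derive it explicitly as $\binom{n}{w}\,2^{w}\,2^{n-w}=\binom{n}{w}\,2^{n}$ before dividing by $4^n$.
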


\begin{proof}
The proportion \( q(w) \) is calculated by counting the number of sequences with GC content \( w \), which is given by \( 2^w \binom{n}{w} \), and then dividing this count by the total number of sequences, \( 4^n \). The proportion \( q_{\epsilon}(w) \) adjusts \( q(w) \) for the constrained subset \(\mathcal{S}_{\epsilon}\) by dividing \( q(w) \) by the sum of \( q(w) \) over all \( w \in \mathcal{S}_{\epsilon} \).
\end{proof}

\begin{figure}[t]
    \centering
\includegraphics[scale=1]{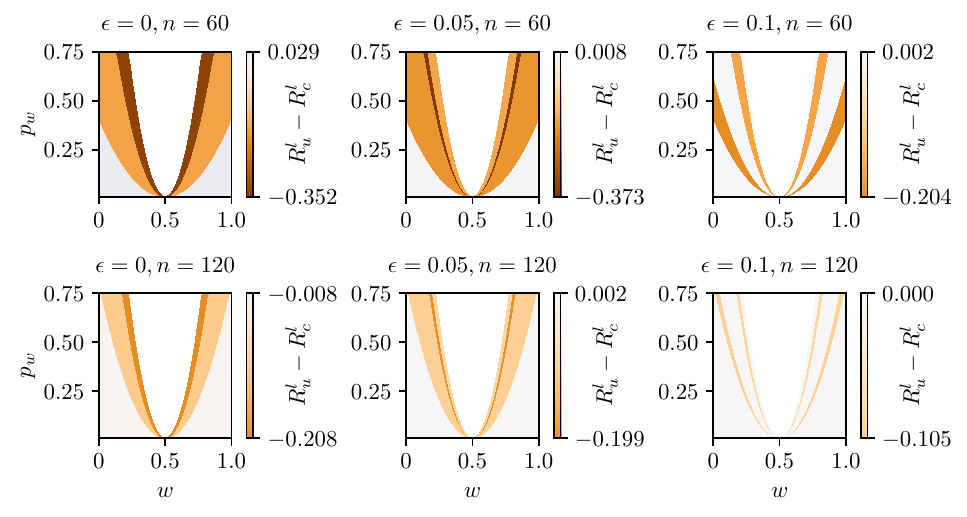}
\captionsetup{labelfont=bf}
    \caption{
    Error regimes in which \(\epsilon\)-constrained and unconstrained coding achieve a larger Gilbert-Varshamov code rate lower bound. The error regimes are color-coded based on the code rate lower bound difference~$R_u^l-R_c^l$, where gray indicates similar performances.  
    }
    \label{fig4}
\end{figure}

Figure~\ref{fig4} characterizes the error regimes in which \(\epsilon\)-constrained coding achieves a larger Gilbert-Varshamov code rate lower bound than unconstrained coding and vice versa for growth factors \(0 \leq \alpha < 10\). For growth factors \(\alpha \geq 10\), the substitution rate increases are far from what is expected in practice. For example, for \(\alpha=10\) and \(n=120\), nucleotides in sequences with GC content less than \(30\%\) or more than \(70\%\) are maximally error-prone. The error regimes are color-coded based on the associated Gilbert-Varshamov code rate lower bound difference \(R_u^l-R_c^l\). Orange indicates the region where \(\epsilon\)-constrained coding achieves a larger code rate lower bound, purple where unconstrained coding achieves a larger code rate lower bound. Gray indicates similar performances.

Overall, for common substitution rate increases, the performance differences between the two coding schemes are marginal. For example, \(0\)-constrained coding achieves a higher code rate lower bound than unconstrained coding (gain \(\approx 0.18\)), for growth factors \(1.6 \leq \alpha \leq 5.6\) and $n=60$. For \(1.6 \leq \alpha \leq 5.6\), the substitution rate for all nucleotides in sequences with \(30\%\) or \(70\%\) GC content is at least five times higher than that in balanced sequences.

The maximum gains in Gilbert-Varshamov code rate lower bounds for \(0\)- and \(0.05\)-constrained coding are approximately \(0.35\) and \(0.37\). These gains are achieved for growth factors exceeding \(5.6\). Under such conditions, all nucleotides in sequences with less than \(15\%\) or more than \(85\%\) GC content become maximally error-prone, an unlikely scenario in practical applications. Therefore, given the marginal differences at common substitution rates, unconstrained coding is preferred over \(\epsilon\)-constrained coding for its simpler code design.

\section{Empirical error analysis}
\label{emp}

Our theoretical results indicate that the increase in substitution rates for nucleotides in runs and in sequences with unbalanced GC content must be very large for constrained coding to be efficient.
To understand in which error regimes current DNA storage systems operate, we empirically analyze how the substitution rates increase as a function of run-length and GC content.
We find that existing DNA storage systems \citep{srinivasavaradhan_trellis_2021,antkowiak_low_2020, meiser_reading_2020, gimpel_digital_2023} lie in error regimes in which unconstrained coding is more efficient than constrained coding for substitution errors.

While we lack theoretical results for the efficiency of constrained coding for insertions and deletions, we also explore how insertion and deletion rates change as a function of run-length and GC content imbalance.

We limit our analysis to runs up to length six and GC content between approximately $35\%$ to $65\%$. In this range, we observe no significant increase in substitution (\(p^S\)), insertion (\(p^I\)), and deletion (\(p^D\)) rates. In random codebooks, runs longer than six and sequences with GC content above $65\%$ or below $35\%$ occur infrequently. Consequently, their impact on the average error rates is minimal. For example, in the dataset by \citet{srinivasavaradhan_trellis_2021}, among the 10,000 randomized sequences (each 110 nucleotides long), runs of seven and eight nucleotides occur only 156 and 16 times, respectively, with no runs longer than eight. The distribution of run-length and GC content for all experiments is shown in Figures~\ref{App12} and~\ref{App13} in Appendix~\ref{App D}, respectively.

Several factors influence the error rates in practical DNA storage systems. The main distinction is usually made between the synthesis and sequencing technologies used. Table~\ref{table1} provides an overview of the technologies, constraints, and sequence designs in the DNA storage systems by \citet{srinivasavaradhan_trellis_2021}; \citet{antkowiak_low_2020}; Netflix (data available at \href{http://d3u0hl24lyh7px.cloudfront.net/encoded_withprimers.txt}{Netflix dataset}); and \citet{gimpel_digital_2023}.

\begin{table}[t]
\centering
\resizebox{\textwidth}{!}{%
\begin{tabular}{|c|c|c|c|c|c|}
\hline
\textbf{Dataset} & \textbf{Length} & \textbf{Number}  & \textbf{Synthesis} & \textbf{ Sequencing} & \textbf{Constraint} \\ 
\hline
\hline
Srinivasavaradhan et al. & 110 & 10,000
& Twist & Nanopore & - \\ 
Antkowiak et al. & 60 & 16,383
& Photolithographic & Illumina & - \\
Netflix & 105+20  & 3,900,000
& Twist & Illumina & - \\ 
Gimpel et al. \uppercase\expandafter{\romannumeral1} & $102+41$ & $12,472$
& Genscript & Illumina & - \\ 
Gimpel et al. \uppercase\expandafter{\romannumeral2} & $117+41$ & $12,402$
& Genscript & Illumina &  $\epsilon=0$ \\
Gimpel et al. \uppercase\expandafter{\romannumeral3} & $108+41$& $12,000$
& Twist & Illumina & - \\
Gimpel et al. \uppercase\expandafter{\romannumeral4} &$108+41$ & $12,000$
& Twist & Illumina &   $\epsilon=0$\\ \hline
 
\end{tabular}%
}
\captionsetup{labelfont=bf}
\caption{\textbf{Dataset characteristics.} The sequence lengths range from $60-117$ nucleotides with information, plus additional nucleotides as primers when sequencing is done with Illumina. \citet{antkowiak_low_2020} (code available at \href{https://github.com/MLI-lab/noisy_dna_data_storage}{noisy\_dna\_data\_storage}), the Netflix-Pool dataset (code available at \href{https://github.com/reinhardh/dna_rs_coding}{dna\_rs\_coding}) and  \citet{gimpel_digital_2023} Experiments~\uppercase\expandafter{\romannumeral1} follow the encoding described by \citet{meiser_reading_2020}. 
In \citet{gimpel_digital_2023} Experiments~\uppercase\expandafter{\romannumeral2}-\uppercase\expandafter{\romannumeral4}, random sequences without indices are used.}
\label{table1}
\end{table}

Figure~\ref{fig7} summarizes the error rates as a function of run-length. We find no increase in the insertion rate for all DNA storage systems considered. In the dataset by \citet{srinivasavaradhan_trellis_2021}, we estimate an exponential increase in the deletion rate for nucleotides in runs, likely due to Nanopore sequencing. Unlike Illumina sequencing, which reads nucleotides one at a time, Nanopore sequencing identifies bases via conductivity changes as nucleotide blocks pass through the nanopore. This process can make it difficult to accurately determine run-lengths from amplified signals, which can lead to an increase in the deletion rate for runs. In the dataset by \citet{antkowiak_low_2020}, which uses lower-cost, higher-error-rate photolithographic synthesis, we observe a minor logarithmic increase in the substitution rate with run-length, and interestingly, a linear decrease in the deletion rate for nucleotides in runs. However, the increase in the substitution rate is far from the error regimes established in Section~\ref{lbr}, where constrained coding for the run-length is efficient. In all remaining datasets, we find no correlation between run-length and the substitution or deletion rates.

\begin{figure}[t]
\centering
\begin{tikzpicture}
\begin{groupplot}[
    group style={
        group size=4 by 6,  
        vertical sep=2.2cm,   
        horizontal sep=1.2cm  
    },
    error bars/y dir=both,
    error bars/y explicit,
    width=4.3cm,
    height=2.6cm,
    scaled y ticks = false, 
    tick pos=left, 
    tick pos=bottom,
    xmax=6, 
    xmin=1
]

\nextgroupplot[ title={Srinivasavaradhan et al.}, ylabel={\% $p_{r}^S$}, xtick={2, 4,6}, xticklabels={$ $, $ $, $ $},  ytick={0, 0.035}, yticklabels={$0$, $3.5$}]
\addplot [
    mark=o,
    mark size=0.1,
    opacity=0.8,
    line width=0.6pt,
    color=Orchid,
    dashed,
    forget plot
] table [
    x=length,
    y=weighted_rate_avg
] {Srinivasavaradhan_sub.dat};
\addplot [
    only marks,
    opacity=0.8,
    color=Orchid,
    mark size=1.5,
    error bars/.cd,
    y dir=both,
    y explicit
] table [
    x=length,
    y=weighted_rate_avg,
    y error=weighted_rate_std
] {Srinivasavaradhan_sub.dat};

\nextgroupplot[ title={Antkowiak et al.}, ylabel style={align=center, text depth=0ex, yshift=0.5cm}, xtick={2, 4,6}, xticklabels={$ $, $ $, $ $}, ytick={0, 0.05,0.1}, yticklabels={$0$,$5$, $10$}]
\addplot [
    mark=o,
    mark size=0.1,
    opacity=0.8,
    line width=0.6pt,
    color=Peach,
    dashed,
    forget plot
] table [
    x=length,
    y=weighted_rate_avg
] {Antkowiak_sub.dat};
\addplot [
    only marks,
    opacity=0.8,
    color=Peach,
    mark size=1.5,
    error bars/.cd,
    y dir=both,
    y explicit
] table [
    x=length,
    y=weighted_rate_avg,
    y error=weighted_rate_std
] {Antkowiak_sub.dat};

\nextgroupplot[ title={Netflix-Pool}, ylabel style={align=center, text depth=0ex, yshift=0.5cm}, xtick={2, 4,6}, xticklabels={$ $, $ $, $ $}, ytick={0, 0.04}, yticklabels={$0$, $4$}]
\addplot [
    mark=o,
    mark size=0.1,
    opacity=0.8,
    line width=0.6pt,
    color=Orchid,
    dashed,
    forget plot
] table [
    x=length,
    y=weighted_rate_avg
] {netflix_sub.dat};
\addplot [
    only marks,
    opacity=0.8,
    color=Orchid,
    mark size=1.5,
    error bars/.cd,
    y dir=both,
    y explicit
] table [
    x=length,
    y=weighted_rate_avg,
    y error=weighted_rate_std
] {netflix_sub.dat};

\nextgroupplot[title={Gimpel et al. \uppercase\expandafter{\romannumeral1}}, xtick={2, 4,6}, xticklabels={$ $, $ $, $ $},ytick={0, 0.015}, yticklabels={$0$, $1.5$} ]
\addplot [
    mark=o,
    mark size=0.1,
    opacity=0.8,
    line width=0.6pt,
    color=Orchid,
    dashed,
    forget plot
] table [
    x=length,
    y=weighted_rate_avg
] {GGall_sub.dat};
\addplot [
    only marks,
    opacity=0.8,
    color=Orchid,
    mark size=1.5,
    error bars/.cd,
    y dir=both,
    y explicit
]table [
    x=length,
    y=weighted_rate_avg,
    y error=weighted_rate_std
] {GGall_sub.dat};

\nextgroupplot[ yshift=2.1cm, xtick={2, 4,6}, ylabel={\% $p_{r}^I$}, xticklabels={$ $, $ $, $ $}, ytick={0, 0.035}, yticklabels={$0$, $3.5$}]
\addplot [
    mark=o,
    mark size=0.1,
    opacity=0.8,
    line width=0.6pt,
    color=Orchid,
    dashed,
    forget plot
] table [
    x=length,
    y=weighted_rate_avg
] {Srinivasavaradhan_ins.dat};
\addplot [
    only marks,
    opacity=0.8,
    color=Orchid,
    mark size=1.5,
    error bars/.cd,
    y dir=both,
    y explicit
]table [
    x=length,
    y=weighted_rate_avg,
    y error=weighted_rate_std
] {Srinivasavaradhan_ins.dat};

\nextgroupplot[ yshift=2.1cm, xtick={2, 4,6}, xticklabels={$ $, $ $, $ $}, ytick={0.02, 0.04}, yticklabels={$2$, $4$}]
\addplot [
    mark=o,
    mark size=0.1,
    opacity=0.8,
    line width=0.6pt,
    color=Orchid,
    dashed,
    forget plot
] table [
    x=length,
    y=weighted_rate_avg
] {Antkowiak_ins.dat};
\addplot [
    only marks,
    opacity=0.8,
    color=Orchid,
    mark size=1.5,
    error bars/.cd,
    y dir=both,
    y explicit
]table [
    x=length,
    y=weighted_rate_avg,
    y error=weighted_rate_std
] {Antkowiak_ins.dat};

\nextgroupplot[ yshift=2.1cm, ylabel style={align=center, text depth=0ex, yshift=0.5cm}, xtick={2, 4,6}, xticklabels={$ $, $ $, $ $}, ytick={0,0.006}, yticklabels={$0$, $0.6$}]
\addplot [
    mark=o,
    mark size=0.1,
    opacity=0.8,
    line width=0.6pt,
    color=Orchid,
    dashed,
    forget plot
]table [
    x=length,
    y=weighted_rate_avg
] {netflix_ins.dat};
\addplot [
    only marks,
    opacity=0.8,
    color=Orchid,
    mark size=1.5,
    error bars/.cd,
    y dir=both,
    y explicit
] table [
    x=length,
    y=weighted_rate_avg,
    y error=weighted_rate_std,
] {netflix_ins.dat};

\nextgroupplot[yshift=2.1cm, xtick={2, 4,6}, xticklabels={$ $, $ $, $ $}, ytick={0, 0.003}, yticklabels={$0$, $0.3$}]
\addplot [
    mark=o,
    mark size=0.1,
    opacity=0.8,
    line width=0.6pt,
    color=Orchid,
    dashed,
    forget plot
] table [
    x=length,
    y=weighted_rate_avg
] {GGall_ins.dat};
\addplot [
    only marks,
    opacity=0.8,
    color=Orchid,
    mark size=1.5,
    error bars/.cd,
    y dir=both,
    y explicit
] table [
    x=length,
    y=weighted_rate_avg,
    y error=weighted_rate_std
] {GGall_ins.dat};


\nextgroupplot[yshift=2.1cm, xlabel={run-length $r$}, ylabel={\% $p_{r}^D$}, ytick={0, 0.075,0.15}, yticklabels={$0$, $7.5$, $15$}]
\addplot [
    mark=o,
    mark size=0.1,
    opacity=1,
    line width=0.6pt,
    color=Peach,
    dashed,
    forget plot
] table [
    x=length,
    y=weighted_rate_avg
] {Srinivasavaradhan_del.dat};
\addplot [
    only marks,
    opacity=0.8,
    color=Peach,
    mark size=1.5,
    error bars/.cd,
    y dir=both,
    y explicit
] table [
    x=length,
    y=weighted_rate_avg,
    y error=weighted_rate_std
] {Srinivasavaradhan_del.dat};

\nextgroupplot[yshift=2.1cm, xlabel={run-length $r$},ytick={0.05,0.1,  0.15}, yticklabels={$5$, $10$, $15$}]
\addplot [
    mark=o,
    mark size=0.1,
    opacity=1,
    line width=0.6pt,
    color=Peach,
    dashed,
    forget plot
] table [
    x=length,
    y=weighted_rate_avg
] {Antkowiak_del.dat};
\addplot [
    only marks,
    opacity=0.8,
    color=Peach,
    mark size=1.5,
    error bars/.cd,
    y dir=both,
    y explicit
] table [
    x=length,
    y=weighted_rate_avg,
    y error=weighted_rate_std
] {Antkowiak_del.dat};

\nextgroupplot[yshift=2.1cm, ylabel style={align=center, text depth=0ex, yshift=0.5cm}, xlabel={run-length $r$}, ytick={0,0.02}, yticklabels={$0$, $2$} ]
\addplot [
    mark=o,
    mark size=0.1,
    opacity=0.8,
    line width=0.6pt,
    color=Orchid,
    dashed,
    forget plot
] table [
    x=length,
    y=weighted_rate_avg
] {netflix_del.dat};
\addplot [
    only marks,
    opacity=0.8,
    color=Orchid,
    mark size=1.5,
    error bars/.cd,
    y dir=both,
    y explicit
] table [
    x=length,
    y=weighted_rate_avg,
    y error=weighted_rate_std
] {netflix_del.dat};

\nextgroupplot[yshift=2.1cm, xlabel={run-length $r$},  ytick={0, 0.02}, yticklabels={$0$, $2$}]
\addplot [
    mark=o,
    mark size=0.1,
    opacity=0.8,
    line width=0.6pt,
    color=Orchid,
    dashed,
    forget plot
] table [
    x=length,
    y=weighted_rate_avg
] {GGall_del.dat};
\addplot [
    only marks,
    opacity=0.8,
    color=Orchid,
    mark size=1.5,
    error bars/.cd,
    y dir=both,
    y explicit
] table [
    x=length,
    y=weighted_rate_avg,
    y error=weighted_rate_std
] {GGall_del.dat};

\nextgroupplot[title={Gimpel et al. \uppercase\expandafter{\romannumeral2}},ylabel={\% $p_{r}^S$}, ylabel style={align=center, text depth=0ex, yshift=0.5cm}, xtick={2, 4,6}, xticklabels={$ $, $ $, $ $},  ytick={0, 0.01}, yticklabels={$0$, $1$}]
\addplot [
    mark=o,
    mark size=0.1,
    opacity=0.8,
    line width=0.6pt,
    color=Orchid,
    dashed,
    forget plot
] table [
    x=length,
    y=weighted_rate_avg
] {GGfix_sub.dat};
\addplot [
    only marks,
    opacity=0.8,
    color=Orchid,
    mark size=1.5,
    error bars/.cd,
    y dir=both,
    y explicit
] table [
    x=length,
    y=weighted_rate_avg,
    y error=weighted_rate_std
] {GGfix_sub.dat};

\nextgroupplot[title={Gimpel et al. \uppercase\expandafter{\romannumeral3}},  xtick={2, 4,6}, xticklabels={$ $, $ $, $ $}, ytick={0, 0.01}, yticklabels={$0$, $1$}]
\addplot [
    mark=o,
    mark size=0.1,
    opacity=0.8,
    line width=0.6pt,
    color=Orchid,
    dashed,
    forget plot
] table [
    x=length,
    y=weighted_rate_avg
] {GTall_sub.dat};
\addplot [
    only marks,
    opacity=0.8,
    color=Orchid,
    mark size=1.5,
    error bars/.cd,
    y dir=both,
    y explicit
] table [
    x=length,
    y=weighted_rate_avg,
    y error=weighted_rate_std
] {GTall_sub.dat};

\nextgroupplot[title={Gimpel et al. \uppercase\expandafter{\romannumeral4}},  xtick={2, 4,6}, xticklabels={$ $, $ $, $ $}, ytick={0, 0.01}, yticklabels={$0$, $ 1$}]
\addplot [
    mark=o,
    mark size=0.1,
    opacity=0.8,
    line width=0.6pt,
    color=Orchid,
    dashed,
    forget plot
] table [
    x=length,
    y=weighted_rate_avg
] {GTfix_sub.dat};
\addplot [
    only marks,
    opacity=0.8,
    color=Orchid,
    mark size=1.5,
    error bars/.cd,
    y dir=both,
    y explicit
] table [
    x=length,
    y=weighted_rate_avg,
    y error=weighted_rate_std
] {GTfix_sub.dat};

\nextgroupplot[hide axis]

\nextgroupplot[yshift=2.1cm, ylabel={\% $p_{r}^I$}, ylabel style={align=center, text depth=0ex, yshift=0.2cm}, xtick={2, 4,6}, xticklabels={$ $, $ $, $ $}, ytick={0,0.003}, yticklabels={$0$, $0.3$}]
\addplot [
    mark=o,
    mark size=0.1,
    opacity=0.8,
    line width=0.6pt,
    color=Orchid,
    dashed,
    forget plot
] table [
    x=length,
    y=weighted_rate_avg
] {GGfix_ins.dat};
\addplot [
    only marks,
    opacity=0.8,
    color=Orchid,
    mark size=1.5,
    error bars/.cd,
    y dir=both,
    y explicit
] table [
    x=length,
    y=weighted_rate_avg,
    y error=weighted_rate_std
] {GGfix_ins.dat};

\nextgroupplot[ yshift=2.1cm, xtick={2, 4,6}, xticklabels={$ $, $ $, $ $}, ytick={0,0.001}, yticklabels={$0$, $0.1$}]
\addplot [
    mark=o,
    mark size=0.1,
    opacity=0.8,
    line width=0.6pt,
    color=Orchid,
    dashed,
    forget plot
] table [
    x=length,
    y=weighted_rate_avg
] {GTall_ins.dat};
\addplot [
    only marks,
    opacity=0.8,
    color=Orchid,
    mark size=1.5,
    error bars/.cd,
    y dir=both,
    y explicit
] table [
    x=length,
    y=weighted_rate_avg,
    y error=weighted_rate_std
] {GTall_ins.dat};

\nextgroupplot[ yshift=2.1cm,  xtick={2, 4,6}, xticklabels={$ $, $ $, $ $}, ytick={0,0.002}, yticklabels={$0$, $0.2$}]
\addplot [
    mark=o,
    mark size=0.1,
    opacity=0.8,
    line width=0.6pt,
    color=Orchid,
    dashed,
    forget plot
] table [
    x=length,
    y=weighted_rate_avg
] {GTfix_ins.dat};
\addplot [
    only marks,
    opacity=0.8,
    color=Orchid,
    mark size=1.5,
    error bars/.cd,
    y dir=both,
    y explicit
] table [
    x=length,
    y=weighted_rate_avg,
    y error=weighted_rate_std
] {GTfix_ins.dat};

\nextgroupplot[hide axis]

\nextgroupplot[yshift=2.1cm, xlabel={run-length $r$},ylabel={\% $p_{r}^D$}, ylabel style={align=center, text depth=0ex, yshift=0.5cm},  ytick={0,0.03}, yticklabels={$0$, $3$}]
\addplot [
    mark=o,
    mark size=0.1,
    opacity=0.8,
    line width=0.6pt,
    color=Orchid,
    dashed,
    forget plot
] table [
    x=length,
    y=weighted_rate_avg
] {GGfix_del.dat};
\addplot [
    only marks,
    opacity=0.8,
    color=Orchid,
    mark size=1.5,
    error bars/.cd,
    y dir=both,
    y explicit
] table [
    x=length,
    y=weighted_rate_avg,
    y error=weighted_rate_std
] {GGfix_del.dat};

\nextgroupplot[yshift=2.1cm, xlabel={run-length $r$}, ytick={0,0.004}, yticklabels={$0$, $0.4$}]
\addplot [
    mark=o,
    mark size=0.1,
    opacity=0.8,
    line width=0.6pt,
    color=Orchid,
    dashed,
    forget plot
] table [
    x=length,
    y=weighted_rate_avg
] {GTall_del.dat};
\addplot [
    only marks,
    opacity=0.8,
    color=Orchid,
    mark size=1.5,
    error bars/.cd,
    y dir=both,
    y explicit
] table [
    x=length,
    y=weighted_rate_avg,
    y error=weighted_rate_std
] {GTall_del.dat};

\nextgroupplot[ yshift=2.1cm, xlabel={run-length $r$},ytick={0,0.03}, yticklabels={$0$, $3$}]
\addplot [
    mark=o,
    mark size=0.1,
    opacity=0.8,
    line width=0.6pt,
    color=Orchid,
    dashed,
    forget plot
] table [
    x=length,
    y=weighted_rate_avg
] {GTfix_del.dat};
\addplot [
    only marks,
    opacity=0.8,
    color=Orchid,
    mark size=1.5,
    error bars/.cd,
    y dir=both,
    y explicit
] table [
    x=length,
    y=weighted_rate_avg,
    y error=weighted_rate_std
] {GTfix_del.dat};

\end{groupplot}
\end{tikzpicture}
\captionsetup{labelfont=bf}
\caption{Weighted error rates (according to the sequence read distribution) in percent and their standard deviations as a function of run-length $r$. }
\label{fig7}
\end{figure}
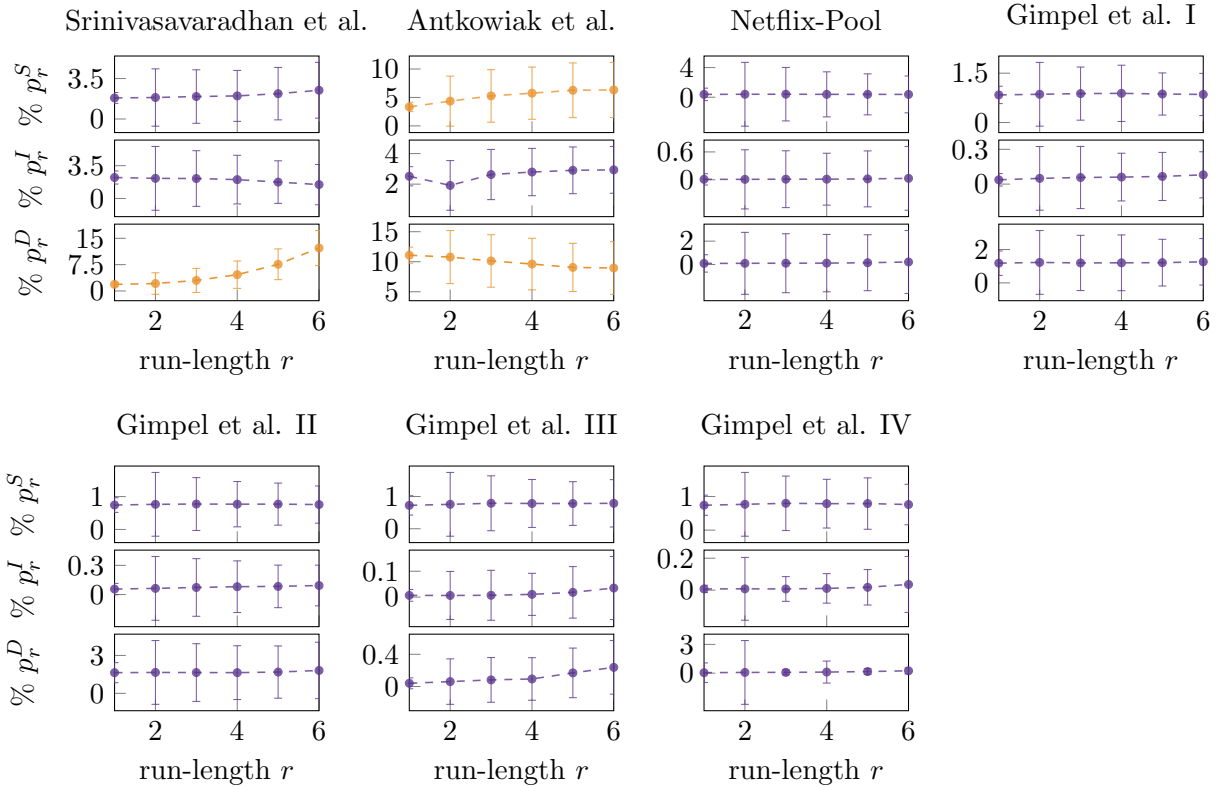

\begin{figure}[h!]
\centering
\begin{tikzpicture}
\begin{groupplot}[
    group style={
        group size=4 by 6,  
        vertical sep=2.2cm,   
        horizontal sep=1.2cm  
    },
    error bars/y dir=both,
    error bars/y explicit,
    width=4.3cm,
    height=2.6cm,
    scaled y ticks = false, 
    tick pos=left, 
    tick pos=bottom, 
    xmin=0.35, 
    xmax=0.65
]

\nextgroupplot[ title={Srinivasavaradhan et al.},ylabel={\% $p_{w}^S$}, xtick={0.4, 0.5, 0.6}, xticklabels={$ $, $ $, $ $}, ytick={0.015, 0.025}, yticklabels={$1.5$, $2.5$}, ymax=0.028]
\addplot [
    mark=o,
    mark size=0.1,
    opacity=0.8,
    line width=0.6pt,
    color=Peach,
    dashed,
    forget plot
] table [
    x=GC,
    y=avg_r_substitutions
] {groupedM.dat};
\addplot [
    only marks,
    opacity=0.8,
    color=Peach,
    mark size=1.5,
    error bars/.cd,
    y dir=both,
    y explicit
] table [
    x=GC,
    y=avg_r_substitutions,
    y error=weighted_std_r_substitutions
] {groupedM.dat};

\nextgroupplot[ title={Antkowiak et al.}, xtick={0.4, 0.5, 0.6}, xticklabels={$ $, $ $, $ $}, ytick={0.03, 0.05}, yticklabels={$3$, $5$}, ymin=0.025 ]
\addplot [
    mark=o,
    mark size=0.1,
    opacity=0.8,
    line width=0.6pt,
    color=Peach,
    dashed,
    forget plot
] table [
    x=GC,
    y=avg_r_substitutions
] {groupedH.dat};
\addplot [
    only marks,
    opacity=0.8,
    color=Peach,
    mark size=1.5,
    error bars/.cd,
    y dir=both,
    y explicit
] table [
    x=GC,
    y=avg_r_substitutions,
    y error=weighted_std_r_substitutions
] {groupedH.dat};

\nextgroupplot[ title={Netflix-Pool}, xmin=0.35, xmax=0.65, xtick={0.4, 0.5, 0.6}, xticklabels={$ $, $ $, $ $}, ytick={0, 0.02}, yticklabels={$0$, $2$}, ymax=0.025 ]
\addplot [
    mark=o,
    mark size=0.1,
    opacity=0.8,
    line width=0.6pt,
    color=Orchid,
    dashed,
    forget plot
] table [
    x=GC,
    y=avg_r_substitutions
] {grouped.dat};
\addplot [
    only marks,
    opacity=0.8,
    color=Orchid,
    mark size=1.5,
    error bars/.cd,
    y dir=both,
    y explicit
] table [
    x=GC,
    y=avg_r_substitutions,
    y error=weighted_std_r_substitutions
] {grouped.dat};

\nextgroupplot[title={Gimpel et al. \uppercase\expandafter{\romannumeral1}}, clip=true,  xmin=0.35, xmax=0.65, xtick={0.4, 0.5, 0.6}, xticklabels={$ $, $ $, $ $},ytick={0.005, 0.01}, yticklabels={$0.5$, $1$}, ymax=0.014 ]
\addplot [
    mark=o,
    mark size=0.1,
    opacity=0.8,
    line width=0.6pt,
    color=Orchid,
    dashed,
    forget plot
] table [
    x=GC,
    y=avg_r_substitutions
] {groupedG.dat};
\addplot [
    only marks,
    opacity=0.8,
    color=Orchid,
    mark size=1.5,
    error bars/.cd,
    y dir=both,
    y explicit
]table [
    x=GC,
    y=avg_r_substitutions,
    y error=weighted_std_r_substitutions
] {groupedG.dat};

\nextgroupplot[ yshift=2.1cm, ylabel={\% $p_{w}^I$}, xmin=0.35, xmax=0.65, xtick={0.4, 0.5, 0.6}, xticklabels={$ $, $ $, $ $}, ytick={0.015, 0.025}, yticklabels={$1.5$, $2.5$}, ymin=0.008, ymax=0.027]
\addplot [
    mark=o,
    mark size=0.1,
    opacity=0.8,
    line width=0.6pt,
    color=Peach,
    dashed,
    forget plot
] table [
    x=GC,
    y=avg_r_insertions
] {groupedM.dat};
\addplot [
    only marks,
    opacity=0.8,
    color=Peach,
    mark size=1.5,
    error bars/.cd,
    y dir=both,
    y explicit
]table [
    x=GC,
    y=avg_r_insertions,
    y error=weighted_std_r_insertions
] {groupedM.dat};

\nextgroupplot[ yshift=2.1cm, xmin=0.35, xmax=0.65, xtick={0.4, 0.5, 0.6}, xticklabels={$ $, $ $, $ $}, ytick={0.02, 0.03}, yticklabels={$2$, $3$}, ymax=0.032]
\addplot [
    mark=o,
    mark size=0.1,
    opacity=0.8,
    line width=0.6pt,
    color=Orchid,
    dashed,
    forget plot
] table [
    x=GC,
    y=avg_r_insertions
] {groupedH.dat};
\addplot [
    only marks,
    opacity=0.8,
    color=Orchid,
    mark size=1.5,
    error bars/.cd,
    y dir=both,
    y explicit
]table [
    x=GC,
    y=avg_r_insertions,
    y error=weighted_std_r_insertions
] {groupedH.dat};

\nextgroupplot[ yshift=2.1cm, xmin=0.35, xmax=0.65, xtick={0.4, 0.5, 0.6}, xticklabels={$ $, $ $, $ $}, ytick={0, 0.001}, yticklabels={$0$, $0.1$}, ymax=0.0015]
\addplot [
    mark=o,
    mark size=0.1,
    opacity=0.8,
    line width=0.6pt,
    color=Orchid,
    dashed,
    forget plot
]table [
    x=GC,
    y=avg_r_insertions
] {grouped.dat};
\addplot [
    only marks,
    opacity=0.8,
    color=Orchid,
    mark size=1.5,
    error bars/.cd,
    y dir=both,
    y explicit
] table [
    x=GC,
    y=avg_r_insertions,
    y error=weighted_std_r_insertions
] {grouped.dat};

\nextgroupplot[,yshift=2.1cm, xmin=0.35, xmax=0.65, xtick={0.4, 0.5, 0.6}, xticklabels={$ $, $ $, $ $}, ytick={0, 0.001}, yticklabels={$0$, $0.1$}, ymax=0.0015]
\addplot [
    mark=o,
    mark size=0.1,
    opacity=0.8,
    line width=0.6pt,
    color=Orchid,
    dashed,
    forget plot
] table [
    x=GC,
    y=avg_r_insertions
] {groupedG.dat};
\addplot [
    only marks,
    opacity=0.8,
    color=Orchid,
    mark size=1.5,
    error bars/.cd,
    y dir=both,
    y explicit
] table [
    x=GC,
    y=avg_r_insertions,
    y error=weighted_std_r_insertions
] {groupedG.dat};


\nextgroupplot[yshift=2.1cm, xlabel={GC content $w$},ylabel={\% $p_{w}^D$}, xmin=0.35, xmax=0.65, xtick={0.4, 0.5, 0.6}, ytick={0.015, 0.025}, yticklabels={$1.5$, $2.5$}, ymin=0.01]
\addplot [
    mark=o,
    mark size=0.1,
    opacity=1,
    line width=0.6pt,
    color=Peach,
    dashed,
    forget plot
] table [
    x=GC,
    y=avg_r_deletions
] {groupedM.dat};
\addplot [
    only marks,
    opacity=0.8,
    color=Peach,
    mark size=1.5,
    error bars/.cd,
    y dir=both,
    y explicit
] table [
    x=GC,
    y=avg_r_deletions,
    y error=weighted_std_r_deletions
] {groupedM.dat};

\nextgroupplot[yshift=2.1cm, xlabel={GC content $w$}, xmin=0.35, xmax=0.65, xtick={0.4, 0.5, 0.6},  ytick={0.1, 0.12}, yticklabels={$10$, $12$}]
\addplot [
    mark=o,
    mark size=0.1,
    opacity=1,
    line width=0.6pt,
    color=Orchid,
    dashed,
    forget plot
] table [
    x=GC,
    y=avg_r_deletions
] {groupedH.dat};
\addplot [
    only marks,
    opacity=0.8,
    color=Orchid,
    mark size=1.5,
    error bars/.cd,
    y dir=both,
    y explicit
] table [
    x=GC,
    y=avg_r_deletions,
    y error=weighted_std_r_deletions
] {groupedH.dat};

\nextgroupplot[yshift=2.1cm, xmin=0.35, xmax=0.65, xtick={0.4, 0.5, 0.6},  xlabel={GC content $w$}, ytick={0, 0.01}, yticklabels={$0$, $1$}]
\addplot [
    mark=o,
    mark size=0.1,
    opacity=0.8,
    line width=0.6pt,
    color=Orchid,
    dashed,
    forget plot
] table [
    x=GC,
    y=avg_r_deletions
] {grouped.dat};
\addplot [
    only marks,
    opacity=0.8,
    color=Orchid,
    mark size=1.5,
    error bars/.cd,
    y dir=both,
    y explicit
] table [
    x=GC,
    y=avg_r_deletions,
    y error=weighted_std_r_deletions
] {grouped.dat};

\nextgroupplot[yshift=2.1cm, xmin=0.35, xmax=0.65, xtick={0.4, 0.5, 0.6},  xlabel={GC content $w$}, ytick={0.01, 0.022}, yticklabels={$1$, $2$}]
\addplot [
    mark=o,
    mark size=0.1,
    opacity=0.8,
    line width=0.6pt,
    color=Orchid,
    dashed,
    forget plot
] table [
    x=GC,
    y=avg_r_deletions
] {groupedG.dat};
\addplot [
    only marks,
    opacity=0.8,
    color=Orchid,
    mark size=1.5,
    error bars/.cd,
    y dir=both,
    y explicit
] table [
    x=GC,
    y=avg_r_deletions,
    y error=weighted_std_r_deletions
] {groupedG.dat};

\nextgroupplot[title={Gimpel et al. \uppercase\expandafter{\romannumeral2}},ylabel={\% $p_{w}^S$}, xmin=0.35, xmax=0.65, xtick={0.4, 0.5, 0.6}, xticklabels={$ $, $ $, $ $}, ytick={0.006, 0.009}, yticklabels={$0.6$, $0.9$}]
\addplot [
    mark=o,
    mark size=0.1,
    opacity=0.8,
    line width=0.6pt,
    color=Orchid,
    dashed,
    forget plot
] table [
    x=GC,
    y=avg_r_substitutions
] {groupedGf.dat};
\addplot [
    only marks,
    opacity=0.8,
    color=Orchid,
    mark size=1.5,
    error bars/.cd,
    y dir=both,
    y explicit
] table [
    x=GC,
    y=avg_r_substitutions,
    y error=weighted_std_r_substitutions
] {groupedGf.dat};

\nextgroupplot[title={Gimpel et al. \uppercase\expandafter{\romannumeral3}}, ytick={0.005, 0.015},xmin=0.35, xmax=0.65, xtick={0.4, 0.5, 0.6}, xticklabels={$ $, $ $, $ $}, ytick={0.005, 0.01}, yticklabels={$0.5$, $1$}]
\addplot [
    mark=o,
    mark size=0.1,
    opacity=0.8,
    line width=0.6pt,
    color=Orchid,
    dashed,
    forget plot
] table [
    x=GC,
    y=avg_r_substitutions
] {groupedT.dat};
\addplot [
    only marks,
    opacity=0.8,
    color=Orchid,
    mark size=1.5,
    error bars/.cd,
    y dir=both,
    y explicit
] table [
    x=GC,
    y=avg_r_substitutions,
    y error=weighted_std_r_substitutions
] {groupedT.dat};

\nextgroupplot[title={Gimpel et al. \uppercase\expandafter{\romannumeral4}}, clip=true, xmin=0.35, xmax=0.65, xtick={0.4, 0.5, 0.6}, xticklabels={$ $, $ $, $ $}, ytick={0.005, 0.01}, yticklabels={$0.5$, $1$}, ymin=0.003, ymax=0.013]
\addplot [
    mark=o,
    mark size=0.1,
    opacity=0.8,
    line width=0.6pt,
    color=Orchid,
    dashed,
    forget plot
] table [
    x=GC,
    y=avg_r_substitutions
] {groupedTf.dat};
\addplot [
    only marks,
    opacity=0.8,
    color=Orchid,
    mark size=1.5,
    error bars/.cd,
    y dir=both,
    y explicit
] table [
    x=GC,
    y=avg_r_substitutions,
    y error=weighted_std_r_substitutions
] {groupedTf.dat};

\nextgroupplot[hide axis]

\nextgroupplot[yshift=2.1cm, ylabel={\% $p_{w}^I$}, xtick={0.4, 0.5, 0.6}, xticklabels={$ $, $ $, $ $}, ytick={0, 0.001}, yticklabels={$0$, $0.1$},ymin=-0.0005, ymax=0.0015]
\addplot [
    mark=o,
    mark size=0.1,
    opacity=0.8,
    line width=0.6pt,
    color=Orchid,
    dashed,
    forget plot
] table [
    x=GC,
    y=avg_r_insertions
] {groupedGf.dat};
\addplot [
    only marks,
    opacity=0.8,
    color=Orchid,
    mark size=1.5,
    error bars/.cd,
    y dir=both,
    y explicit
] table [
    x=GC,
    y=avg_r_insertions,
    y error=weighted_std_r_insertions
] {groupedGf.dat};

\nextgroupplot[ yshift=2.1cm, xmin=0.35, xmax=0.65, xtick={0.4, 0.5, 0.6}, xticklabels={$ $, $ $, $ $}, , ytick={0, 0.0003}, yticklabels={$0$, $0.03$}]
\addplot [
    mark=o,
    mark size=0.1,
    opacity=0.8,
    line width=0.6pt,
    color=Orchid,
    dashed,
    forget plot
] table [
    x=GC,
    y=avg_r_insertions
] {groupedT.dat};
\addplot [
    only marks,
    opacity=0.8,
    color=Orchid,
    mark size=1.5,
    error bars/.cd,
    y dir=both,
    y explicit
] table [
    x=GC,
    y=avg_r_insertions,
    y error=weighted_std_r_insertions
] {groupedT.dat};

\nextgroupplot[ yshift=2.1cm, xmin=0.35, xmax=0.65, xtick={0.4, 0.5, 0.6}, xticklabels={$ $, $ $, $ $}, ytick={0, 0.0003}, yticklabels={$0$, $0.03$}, ymax=0.00038]
\addplot [
    mark=o,
    mark size=0.1,
    opacity=0.8,
    line width=0.6pt,
    color=Orchid,
    dashed,
    forget plot
] table [
    x=GC,
    y=avg_r_insertions
] {groupedTf.dat};
\addplot [
    only marks,
    opacity=0.8,
    color=Orchid,
    mark size=1.5,
    error bars/.cd,
    y dir=both,
    y explicit
] table [
    x=GC,
    y=avg_r_insertions,
    y error=weighted_std_r_insertions
] {groupedTf.dat};

\nextgroupplot[hide axis]

\nextgroupplot[yshift=2.1cm, ylabel={\% $p_{w}^D$}, xmin=0.35, xmax=0.65, xtick={0.4, 0.5, 0.6}, xlabel={GC content $w$}, , ytick={0.015, 0.025}, yticklabels={$1.5$, $2.5$}, ymax=0.029]
\addplot [
    mark=o,
    mark size=0.1,
    opacity=0.8,
    line width=0.6pt,
    color=Orchid,
    dashed,
    forget plot
] table [
    x=GC,
    y=avg_r_deletions,
] {groupedGf.dat};
\addplot [
    only marks,
    opacity=0.8,
    color=Orchid,
    mark size=1.5,
    error bars/.cd,
    y dir=both,
    y explicit
] table [
    x=GC,
    y=avg_r_deletions,
    y error=weighted_std_r_deletions
] {groupedGf.dat};

\nextgroupplot[yshift=2.1cm,  xmin=0.35, xmax=0.65, xtick={0.4, 0.5, 0.6},  xlabel={GC content $w$}, , ytick={0, 0.001}, yticklabels={$0$, $0.1$}]
\addplot [
    mark=o,
    mark size=0.1,
    opacity=0.8,
    line width=0.6pt,
    color=Orchid,
    dashed,
    forget plot
] table [
    x=GC,
    y=avg_r_deletions
] {groupedT.dat};
\addplot [
    only marks,
    opacity=0.8,
    color=Orchid,
    mark size=1.5,
    error bars/.cd,
    y dir=both,
    y explicit
] table [
    x=GC,
    y=avg_r_deletions,
    y error=weighted_std_r_deletions
] {groupedT.dat};

\nextgroupplot[, yshift=2.1cm,  xmin=0.35, xmax=0.65, xtick={0.4, 0.5, 0.6},  xlabel={GC content $w$}, ytick={0, 0.001}, yticklabels={$0$, $0.1$}]
\addplot [
    mark=o,
    mark size=0.1,
    opacity=0.7,
    line width=0.6pt,
    color=Orchid,
    dashed,
    forget plot
] table [
    x=GC,
    y=avg_r_deletions
] {groupedTf.dat};
\addplot [
    only marks,
    opacity=0.8,
    color=Orchid,
    mark size=1.5,
    error bars/.cd,
    y dir=both,
    y explicit
] table [
    x=GC,
    y=avg_r_deletions,
    y error=weighted_std_r_deletions
] {groupedTf.dat};

\end{groupplot}
\end{tikzpicture}
\captionsetup{labelfont=bf}
\caption{Weighted error rates (according to the sequence read distribution) in percent and their standard deviations as a function of GC content $w$. } 
\label{fig8}
\end{figure}
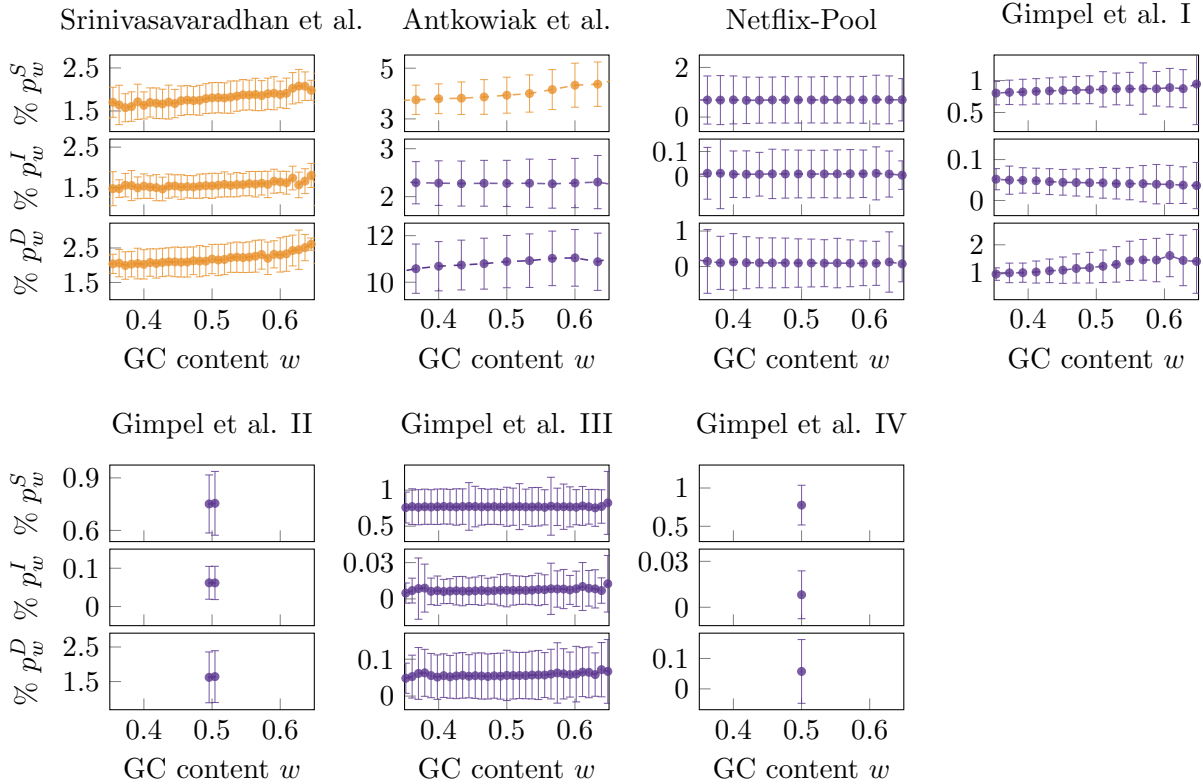

Figure~\ref{fig8} summarizes the error rates as a function of sequence GC content for all datasets considered. In \citet{gimpel_digital_2023} and the Netflix-Pool, we estimate no correlation between GC content and all error rates. In the dataset by \citet{srinivasavaradhan_trellis_2021}, we find a minor linear increase in all error rates with GC content. However, the increase is within the one percentage point region, and the increase in the substitution rate is far from the error regimes established in Section~\ref{lbc}, in which constrained coding for the GC content is efficient. Similarly, the substitution rate increase observed in the dataset by \citet{antkowiak_low_2020} lies outside the error regimes in which constrained coding for the GC content is efficient.

There are several reasons that can explain the difference between our experimental results and those in the literature \citep{ross_characterizing_2013, bragg_shining_2013, bohlin_estimation_2019, stoler_sequencing_2021}, which suggest that homopolymers and GC content imbalances increase error rates. The key difference is that most papers on constrained coding for DNA data storage cite studies estimating error rates for DNA sequences stored in vivo. For example, the frequently cited study by \citet{ross_characterizing_2013} estimates error rates for different sequencing technologies using human and bacterial DNA probes. A similar approach is taken in the more recent error analysis by \citet{laehnemann_denoising_2016}.

The occurrence and length of runs can vary among organisms and genomic regions, with parts of the genome exhibiting frequent long runs. Next-generation sequencing technologies use "sequencing by synthesis" to read the DNA strands and chemicals to detect the incorporation of a nucleotide into the growing polymer chain \citep{niedringhaus_landscape_2011}. The signals emitted by the chemicals are amplified in long runs and can accumulate if not completely removed after a sequencing cycle, leading to 'post-homopolymer substitutions' \citep{stoler_sequencing_2021}. The increase in substitution, insertion, and deletion rates may be due to the sequencer not being able to correctly identify long run-lengths from the amplified signal, as run-length and signal intensity do not necessarily match perfectly \citep{pereira_bioinformatics_2020, laehnemann_denoising_2016}. However, in DNA storage, where sequences are randomized, long runs occur infrequently, and their impact on the average error rates is minimal.

Similarly, GC content can vary significantly across different regions of the human genome and among organisms, potentially leading to higher error rates. For example, the \textit{Plasmodium falciparum} and \textit{Rhodobacter sphaeroides} bacteria have a mean GC content of 19\% and 69\%, respectively \citep{ross_characterizing_2013}. However, in randomized DNA sequences, the average GC content is balanced.

Our empirical analysis (see Appendix~\ref{App D}, Figure~\ref{App14}) and studies \citep{benjamini_summarizing_2012, dohm_substantial_2008} find that GC content is correlated with read coverage, and more sequencing is required for DNA sequences with unbalanced GC content. This may be because sequences with extreme GC content are amplified less efficiently during PCR amplification \citep{kozarewa_amplification-free_2009}. However, our theoretical results do not provide information on the cost of additional sequencing compared to the cost of avoiding error-prone sequences. No direct comparison can be made here, but the cost of additional sequencing must be weighed against the cost of more synthesis.

\section{Conclusion}
Our results suggest that in most current DNA storage systems, embracing substitution errors is more efficient than avoiding them through constrained coding.

However, our channel models have certain limitations over practical DNA storage systems. First, both the run-length varying and the GC content channel do not account for asymmetric error probabilities observed in practice. For example, \citet{heckel_characterization_2019} finds that in their experiments, substitutions from $C$ to $T$ and $G$ to $A$ are the most frequent. Second, our analysis is limited to substitution errors and does not address deletions, insertions, or molecular impairments such as strand breakage because their channel capacities and, hence, achievable code rates are unknown to date.

Therefore, while this study highlights the use of unconstrained coding, constrained coding may still prove useful in different systems, and future DNA storage technologies or channels that include deletions, insertions and molecular impairments may very well lead to such systems.

\section*{Acknowledgment}
The authors thank Maria Abu-Sini, Antonia Wachter-Zeh and Eitan Yaakobi for helpful discussions and feedback.

The research leading to these results received funding from the European Union under the Horizon 2020 Program, FET-Open: DNA-FAIRYLIGHTS, Grant Agreement No. 964995 and FET-Open: DiDAX, Grant Agreement No. 101115134. Views and opinions expressed are however those of the author(s) only and do not necessarily reflect those of the European Union or the European Research Council Executive Agency. Neither the European Union nor the granting authority can be held responsible for them.

\printbibliography
\appendix
\label{appendix}
\section{Proof of Theorem~\ref{th1}}
\label{pf1}

The proof follows the random coding argument of the achievability part of Shannon’s Channel Coding Theorem \citep{shannon_mathematical_1948}. The difference to Shannon's original proof is that we introduce a random variable $R$ which represents the run-length of a given nucleotide. This allows treating the channel output as conditionally independent of other channel inputs and outputs by conditioning on both the transmitted nucleotide and its run-length.

We first prove that the code rate $R_c$ is achievable for $m$-constrained coding. We generate an \(m\)-constrained code, \(\mathcal{C}_m\), of length $n$ with rate \(R_c\) as follows. The encoding function \(f_m\) maps each message index \(w\) sequentially from \(1\) to \(2^{R_c n}\) to a codeword $\mathbf{X}$, where message index \(1\) corresponds to the first generated codeword, message index \(2\) to the second, and so on, up to \(2^{R_c n}\). 

Each codeword \(\mathbf{X} = X_1 \ldots X_n\) consists of nucleotides \(X_i\), where each \(X_i\) is generated by a Markov chain of order \(m\) to satisfy the maximum run-length constraint:
\begin{equation}
\label{X}
    \Pr(f_m(w) = \mathbf{x}) = \prod_{i=1}^n \Pr(X_i = x_i \mid x_{i-m} \ldots x_{i-1}),
\end{equation}
where
\begin{equation}
\Pr(X_i = x_i \mid x_{i-m} \ldots x_{i-1}) = 
\begin{cases} 
0 & \text{if } x_i = x_{i-1} \text{ and } x_{i-m} = x_{i-m+1} = \ldots = x_{i-1}, \\
\frac{1}{3} & \text{if } x_i \neq x_{i-1} \text{ and } x_{i-m} = x_{i-m+1} = \ldots = x_{i-1}, \\
\frac{1}{4} & \text{otherwise}.
\end{cases}
\end{equation}

As an example, with \(m=1\), where consecutive nucleotides cannot be of the same type (i.e., \(AA\) cannot occur), the transition matrix \(\mathbf{B}\) is:
\begin{equation}
\label{B}
   \mathbf{B} = \begin{bmatrix}
0 & \frac{1}{3} & \frac{1}{3} & \frac{1}{3} \\
\frac{1}{3} & 0 & \frac{1}{3} & \frac{1}{3} \\
\frac{1}{3} & \frac{1}{3} & 0 & \frac{1}{3} \\
\frac{1}{3} & \frac{1}{3} & \frac{1}{3} & 0 \\
\end{bmatrix}.  
\end{equation}
The probability of generating the entire code \(\mathcal{C}_m\) with rate \(R_c\) is:
\[
\Pr(\mathcal{C}_m) = \prod_{w=1}^{2^{R_c n}} \Pr(f_m(w) = \mathbf{x}) = \prod_{w=1}^{2^{R_c n}} \prod_{i=1}^n \Pr(X_i = x_i \mid x_{i-m} \ldots x_{i-1}).
\]
Both the code \(\mathcal{C}_m\) and the channel characteristics, defined in Definition~\ref{RLV Channel}, are known to the receiver and the transmitter. Transmission occurs by selecting a message index \( W \) uniformly at random from the set of all message indices \(\{1, 2, \ldots, 2^{R_c n}\}\), where each index is chosen with equal probability \(1/2^{R_c n}\). The codeword \( f_m(W=w) = \mathbf{x} \) is transmitted, and the receiver receives the output sequence \(\mathbf{Y}\), distributed according to:
\begin{equation}
\label{Y}
\Pr(\mathbf{Y} = \mathbf{y} \mid \mathbf{x}) = \prod_{i=1}^n \Pr(Y_i = y_i \mid x_i, r_i),
\end{equation}
where \( r_i \) is the run-length of nucleotide \( x_i \) which is determined by $\vx$.

To determine which codeword was transmitted, the receiver uses joint typicality decoding. Originally, joint typicality decoding is defined for independent and identically distributed nucleotides in a memoryless channel. We adjust this definition to account for dependencies in codeword generation and that the substitution probability depends on the run-length, which is determined by the adjacent input nucleotides.

Define \(P^{\mathbf{X}}=\{X_i\}_{i=1}^n\) as the stationary and ergodic stochastic process that generates the codewords \(\mathbf{X}\) according to Equation \eqref{X}, and \(P^{\mathbf{Y}}=\{Y_i\}_{i=1}^n\) as the stationary and ergodic stochastic process that generates the output sequences \(\mathbf{Y}\) according to Equation \eqref{Y}.

A stochastic process \( P^{\mathbf{Z}} \) is defined as a sequence of random variables \(\{Z_i\}_{i=1}^n\), where each random variable \(Z_i\) takes values in alphabet \(\mathcal{Z}\). The process is characterized by its joint probability distribution:
\[
\Pr(Z_1 = z_1, Z_2 = z_2, \ldots, Z_n = z_n),
\]
for each \(n = 1, 2, \ldots\). A stochastic process is said to be stationary if its statistical properties do not change over time and is said to be ergodic if time averages converge to ensemble averages. The entropy rate of a stationary and ergodic stochastic process is defined as:
\[
H\left(P^{\mathbf{Z}}\right) = \lim_{n \to \infty} \frac{1}{n} H(Z_1, Z_2, \ldots, Z_n),
\]
where the limit exists \citep{cover_elements_2012}.

Let us further define the set \(\mathcal{T}_{\epsilon}\) of \(\epsilon\)-joint typical sequences \((\mathbf{x}, \mathbf{y})\) as follows. 

\begin{definition}
\textbf{\emph{Adjusted from Section 7.6. \cite{cover_elements_2012}}.}
\label{T}
Define the set $\mathcal{T}_{\epsilon}$ of jointly typical sequences with respect to the joint distribution $\Pr(\mathbf{X}=\mathbf{x}, \mathbf{Y}=\mathbf{y})$ as the set of sequences $(\mathbf{x}, \mathbf{y})$ whose empirical entropies are $\epsilon$-close to the true entropies:
\begin{align*}
\mathcal{T}_{\epsilon} = \Bigl\{ &(\mathbf{x}, \mathbf{y}) \in \{A,C,G,T\}^n \times \{A,C,G,T\}^n : \\
&\Bigl| -\frac{1}{n} \log \Pr(\mathbf{X}=\mathbf{x}) - H\left(P^{\mathbf{X}}\right) \Bigr| < \epsilon, \\
&\Bigl| -\frac{1}{n} \log \Pr(\mathbf{Y}=\mathbf{y}) - H\left(P^{\mathbf{Y}}\right) \Bigr| < \epsilon, \\
&\Bigl| -\frac{1}{n} \log \Pr(\mathbf{X}=\mathbf{x}, \mathbf{Y}=\mathbf{y}) - H\left(P^{\mathbf{X}}, P^{\mathbf{Y}}\right) \Bigr| < \epsilon \Bigr\},
\end{align*}
where \(P^{\mathbf{X}}\) and \(P^{\mathbf{Y}}\) are defined as above.
\end{definition}

We are now ready to define the decoding rule by which the receiver declares message indices for received sequences. The receiver declares the message index \( \hat{w} \) for the received sequence \( \mathbf{y} \) if it is jointly typical with the transmitted codeword \( f_m(w) = \mathbf{x} \) and not with any other codeword \( f_m(w') \) for \( w' \neq \hat{w} \). If no such \( \hat{w} \) exists, an error is declared.

Thus, the receiver makes a decoding error if the received sequence \( \mathbf{y} \) is either not jointly typical with \( \mathbf{x} \) or is jointly typical with a codeword of a different index. The probability of these events is given by the joint asymptotic equipartition property stated in Theorem \ref{AEP}. The joint asymptotic equipartition property holds for stationary and ergodic Markov processes due to the Shannon-McMillan-Breiman theorem. The Shannon-McMillan-Breiman theorem states that if \( H(P^{\mathbf{Z}}) \) is the entropy rate of a finite-valued stationary ergodic process \(P^{\mathbf{Z}}=\{Z_i\}\), then \( -\frac{1}{n} \log \Pr(Z_1, \ldots, Z_n= z_1, \ldots, z_n) \) converges to \( H(P^{\mathbf{Z}}) \) with probability $1$ \citep{cover_elements_2012}. 

\begin{theorem}
\textbf{\emph{Adjusted from Theorem 7.6.1. \citep{cover_elements_2012}.}}
\label{AEP}
For sequences \((\mathbf{x}, \mathbf{y})\) of length \(n\) that are drawn independently and identically according to the distribution \(\Pr(\mathbf{X}=\mathbf{x}, \mathbf{Y}=\mathbf{y})\):
\begin{enumerate}
    \item \(\Pr((\mathbf{X}=\mathbf{x}, \mathbf{Y}=\mathbf{y}) \in  \mathcal{T}_{\epsilon}) \to 1\) as \(n \to \infty\).
    \item \(|\mathcal{T}_{\epsilon}| \leq 2^{n(H\left(P^{\mathbf{X}}, P^{\mathbf{Y}}\right) + \epsilon)}\).
    \item If \((\tilde{\mathbf{X}}, \tilde{\mathbf{Y}}) \sim \Pr(\mathbf{X}=\mathbf{x})\Pr(\mathbf{Y}=\mathbf{y})\) [i.e., \(\tilde{\mathbf{X}}\) and \(\tilde{\mathbf{Y}}\) are independent but have the same marginals as \(Pr(\mathbf{X}=\mathbf{x}, \mathbf{Y}=\mathbf{y})\)], then
    \[
    \Pr((\tilde{\mathbf{X}}, \tilde{\mathbf{Y}}) \in \mathcal{T}_{\epsilon}) \leq 2^{-n\left(I\left(P^{\mathbf{X}}; P^{\mathbf{Y}})\right) - 3\epsilon\right)}.
    \]
\end{enumerate}
\end{theorem}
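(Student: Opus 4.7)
The plan is to follow the classical proof of joint AEP (Cover--Thomas, Theorem 7.6.1) essentially verbatim, with the single substantive change that the weak law of large numbers is replaced by the Shannon--McMillan--Breiman theorem, since the input process $P^{\mathbf{X}}$ is a stationary ergodic Markov chain of order $m$ rather than an i.i.d.\ sequence. The proof then proceeds in three parts, one for each numbered claim, and these parts are completely routine once the right convergence statements are in hand.

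First I would establish the required stationarity and ergodicity. The input process $P^{\mathbf{X}}$ defined by Equation~\eqref{X} is a finite-state, irreducible, aperiodic Markov chain (the transition rule forbids continuations of an $m$-run, but allows transitions to any of the other three letters, so the chain is easily seen to be irreducible and aperiodic), hence it is stationary when started in its stationary distribution, and ergodic. To handle the joint process, I would augment the state by the current run-length so that $(X_{i-m+1},\ldots,X_i,R_i,Y_i)$ is itself a stationary ergodic Markov chain; the output sequence $P^{\mathbf{Y}}$ is then a deterministic function of this augmented chain, so it inherits stationarity and ergodicity, and so does the joint process $P^{\mathbf{X},\mathbf{Y}}$. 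Shannon--McMillan--Breiman then gives, with probability one,
\[
-\tfrac{1}{n}\log \Pr(\mathbf{X}) \to H(P^{\mathbf{X}}),\quad
-\tfrac{1}{n}\log \Pr(\mathbf{Y}) \to H(P^{\mathbf{Y}}),\quad
-\tfrac{1}{n}\log \Pr(\mathbf{X},\mathbf{Y}) \to H(P^{\mathbf{X}},P^{\mathbf{Y}}),
\]
so each of the three defining inequalities of $\mathcal{T}_\epsilon$ fails with probability tending to $0$, and a union bound over the three events gives Property~1.

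For Property~2, every typical pair satisfies $\Pr(\mathbf{X}=\mathbf{x},\mathbf{Y}=\mathbf{y}) \geq 2^{-n(H(P^{\mathbf{X}},P^{\mathbf{Y}})+\epsilon)}$ by the third inequality in Definition~\ref{T}. Summing this lower bound over all pairs in $\mathcal{T}_\epsilon$ and using $\sum_{\mathcal{T}_\epsilon} \Pr(\mathbf{x},\mathbf{y}) \leq 1$ yields $|\mathcal{T}_\epsilon| \leq 2^{n(H(P^{\mathbf{X}},P^{\mathbf{Y}})+\epsilon)}$. For Property~3, the typicality inequalities on the marginals give $\Pr(\mathbf{X}=\mathbf{x}) \leq 2^{-n(H(P^{\mathbf{X}})-\epsilon)}$ and analogously for $\mathbf{y}$, so
\[
\Pr\bigl((\tilde{\mathbf{X}},\tilde{\mathbf{Y}}) \in \mathcal{T}_\epsilon\bigr)
\;=\; \sum_{(\mathbf{x},\mathbf{y}) \in \mathcal{T}_\epsilon} \Pr(\mathbf{X}=\mathbf{x})\Pr(\mathbf{Y}=\mathbf{y})
\;\leq\; |\mathcal{T}_\epsilon|\, 2^{-n(H(P^{\mathbf{X}})+H(P^{\mathbf{Y}})-2\epsilon)},
\]
which, combined with Property~2 and the identification $I(P^{\mathbf{X}};P^{\mathbf{Y}}) = H(P^{\mathbf{X}})+H(P^{\mathbf{Y}})-H(P^{\mathbf{X}},P^{\mathbf{Y}})$, delivers the claimed bound $2^{-n(I(P^{\mathbf{X}};P^{\mathbf{Y}})-3\epsilon)}$.

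The main obstacle, and essentially the only non-routine point, is justifying that Shannon--McMillan--Breiman applies to $P^{\mathbf{Y}}$ and to the joint process: neither is Markov of any fixed order (the output marginal is a hidden Markov process once one conditions out the run-length, and the joint process is Markov only after augmenting with the run-length state). The cleanest fix is the augmentation argument above, because a projection of a stationary ergodic process is itself stationary and ergodic, and Shannon--McMillan--Breiman holds for any finite-valued stationary ergodic process. Once this is in place, the remaining calculations in Properties~1--3 are purely the standard typicality counting and mirror Cover--Thomas line for line, so there is no additional work beyond bookkeeping.
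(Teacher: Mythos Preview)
Your proposal is correct and follows essentially the same approach as the paper: the paper does not give a standalone proof of this theorem but simply states it as an adjustment of Cover--Thomas Theorem~7.6.1, justified by invoking the Shannon--McMillan--Breiman theorem for finite-valued stationary ergodic processes in place of the i.i.d.\ weak law. Your write-up is in fact more detailed than the paper's own treatment, particularly the explicit state-augmentation argument to establish ergodicity of the joint and output processes, which the paper leaves implicit.
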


Following Shannon's random coding argument, we analyze the probability of decoding error over the random choice of a codebook to show that there exists at least one codebook with a probability of decoding error approaching zero as the sequence length \(n \to \infty\). The average probability of decoding error, averaged over all codebooks and codewords, can be bounded as follows:

\begin{align*}
&\sum_{\mathcal{C}_m} \Pr(\mathcal{C}_m) \sum_{w=1}^{2^{R_cn}} \Pr(f_m(w) \neq f_m(\hat{w})) \\
\leq &\sum_{\mathcal{C}_m} \Pr(\mathcal{C}_m) \left( \Pr((f_m(1), \mathbf{Y}) \not\in T^\epsilon) + \sum_{w=2}^{2^{R_cn}} \Pr((f_m(w), \mathbf{Y}) \in T^\epsilon) \right) \\
\leq &\sum_{\mathcal{C}_m} \Pr(\mathcal{C}_m) \left( \epsilon + \sum_{w=2}^{2^{R_cn}} 2^{-n(I(P^{\mathbf{X}};P^{\mathbf{Y}}) - 3\epsilon)} \right) \\
= &\epsilon + (2^{R_cn} - 1) 2^{-n(I(P^{\mathbf{X}};P^{\mathbf{Y}}) - 3\epsilon)}.
\end{align*}
In the first inequality, we use the union bound and, without loss of generality, assume message index \(w = 1\) was transmitted, given the constant probability of decoding error for all message indices due to the random code generation. The receiver makes an error if the received sequence \(\mathbf{Y}\) is not jointly typical with \(f_m(1)\), or if it is jointly typical with any other codeword \(f_m(i)\) for \(i = 2, \dots, 2^{R_{cn}}\). By the joint asymptotic equipartition property, the probability of the former approaches $1$, while the probability of the latter is less than \(2^{-n(I(P^{\mathbf{X}};P^{\mathbf{Y}}) - 3\epsilon)}\) as \(n \to \infty\).

For code rates \(R_c \leq I(P^{\mathbf{X}}; P^{\mathbf{Y}}) - 3\epsilon\), we can choose \(\epsilon\) and \(n\) such that the average probability of decoding error is less than \(2\epsilon\), making it arbitrarily small as \(n \to \infty\). To further show that the maximum probability of decoding error (over all codewords) approaches zero, we consider the following standard argument. Since the average probability of decoding error over all codes and their codewords tends to zero, there must be at least one code, \(\mathcal{C}_m^*\), where this holds. Further, for \(\mathcal{C}_m^*\), the maximum probability of decoding error over all codewords goes to zero, if we discard the worst half of the codewords, resulting in a rate loss of \(1/n\), which is negligible since \(n \to \infty\).

So far, we have shown that all rates \(R_c \leq I\left(P^{\mathbf{X}}; P^{\mathbf{Y}}\right)\) are achievable for \(m\)-constrained coding. Next, we compute the mutual information \(I\left(P^{\mathbf{X}};P^{\mathbf{Y}}\right)\) between the input process \(P^{\mathbf{X}}\) and the output process \(P^{\mathbf{Y}}\):
\begin{align*}
    I\left(P^{\mathbf{X}}; P^{\mathbf{Y}}\right) &{=} H\left(P^{\mathbf{Y}}\right) - H\left(P^{\mathbf{Y}}|P^{\mathbf{X}}\right) \\ 
    &\stackrel{(a)}{=} H\left(P^{\mathbf{Y}}\right) - \lim_{n \to \infty} \frac{1}{n}H\left(\mathbf{Y}|\mathbf{X}\right) \\
    &\stackrel{(b)}{=} H\left(P^{\mathbf{Y}}\right) - \lim_{n \to \infty} \frac{1}{n}H\left(\mathbf{Y}|\mathbf{X}, \mathbf{R}\right) \\
    &\stackrel{(c)}{=} H\left(P^{\mathbf{Y}}\right) - \lim_{n \to \infty} \frac{1}{n} \sum_{i=1}^n H\left(Y_i | X_i, R_i\right) \\
    &\stackrel{(d)}{=} H\left(P^{\mathbf{Y}}\right) - \lim_{n \to \infty} \frac{1}{n} \sum_{i=1}^n \sum_{r=1}^m \Pr(R_i=r) H\left(Y_i | X_i, r\right),
\end{align*}
where in step (a), we use the definition of the entropy of a stochastic process. In step (b), we use the fact that conditioning on \(\mathbf{R}\) does not change the entropy, as the sequence of run-lengths \(\mathbf{R}=R_1, \ldots, R_n\) is completely determined by the codeword \(\mathbf{X}=X_1, \ldots, X_n\). In step (c), we use the conditional independence of output nucleotide \(Y_i\) given input nucleotide \(X_i\) and its run-length \(R_i\). The probability \(\Pr(R_i=r)\) is the probability that a nucleotide \(X_i\) generated according to stochastic process \(P^{\mathbf{X}}\) occurs in a run of length \(r\). We derive an expression for the asymptotic distribution of \(\Pr(R_i=r)\) in Lemma~\ref{Lemma:2}.

\begin{lemma}
\textbf{\emph{Asymptotic distribution of run-lengths:}}
\label{Lemma:2}
Let \( \mathbf{X} \) be chosen uniformly at random from the set of \(m\)-constrained sequences \(\mathcal{A}_m\). 
For any nucleotide \( X_i \) at position \( i \), the probability that \( X_i \) is part of a run of length exactly \( r \) converges in distribution to:
\[
\Pr(R_i=r) \xrightarrow{d} q_m(r)=\frac{r\left(\frac{1}{4}\right)^{r-1} \left(\frac{3}{4}\right)^2}{\sum_{s=1}^m s\left(\frac{1}{4}\right)^{s-1}  \left(\frac{3}{4}\right)^2} \text{ as } n \to \infty.
\]
\end{lemma}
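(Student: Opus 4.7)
The plan is to derive $q_m(r)$ in two steps: first compute the run-length distribution for a uniformly random sequence on $\{A, C, G, T\}^n$ with no constraints, then obtain the constrained distribution by conditioning on $\mathbf{X} \in \mathcal{A}_m$ and renormalizing.

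For the unconstrained step, I would fix an interior position $i$ and decompose the event $\{R_i = r\}$ into the disjoint union over the $r$ offsets $j \in \{0, 1, \ldots, r-1\}$ locating $i$ within its length-$r$ run. Under the unconstrained uniform measure, the probability that positions $i-j, \ldots, i-j+r-1$ share a common nucleotide while the flanking positions $i-j-1$ and $i-j+r$ both hold different nucleotides is $4 \cdot (1/4)^r \cdot (3/4)^2 = (1/4)^{r-1}(3/4)^2$, by independence and symmetry across the four nucleotides. Summing over the $r$ offsets yields $\Pr(R_i = r) \to q(r) = r(1/4)^{r-1}(3/4)^2$ in the large-$n$ limit, for $i$ sufficiently far from the boundary.

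For the constrained step, I would write
\[
\Pr(R_i = r \mid \mathbf{X} \in \mathcal{A}_m) = \frac{\Pr(R_i = r, \mathbf{X} \in \mathcal{A}_m)}{\Pr(\mathbf{X} \in \mathcal{A}_m)}
\]
and show that asymptotically the global constraint $\mathbf{X} \in \mathcal{A}_m$ decouples into a local compatibility condition near $i$ (namely $R_i \leq m$) and a contribution from positions far from $i$ that cancels between numerator and denominator. A generating-function argument using $4 G(x)/(1 - 3 G(x))$ with $G(x) = x + x^2 + \cdots + x^m$, or equivalently a renewal-theoretic treatment of the successive runs as near-i.i.d. random variables, produces this factorization. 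What remains is $q(r)$ restricted to $\{1, \ldots, m\}$ and renormalized, yielding $q_m(r) = q(r)/\sum_{s=1}^m q(s)$ as the limiting distribution.

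The main obstacle is justifying the decoupling rigorously: the event $\{\mathbf{X} \in \mathcal{A}_m\}$ is a joint constraint on all $n$ positions, and one must verify that conditioning on it influences the local run-length distribution at position $i$ only through the local constraint $\{R_i \leq m\}$, with the far-away corrections vanishing as $n \to \infty$. Standard dominant-singularity asymptotics applied to the constrained-string generating function provide this factorization in the leading exponential order and thereby secure the claimed convergence.
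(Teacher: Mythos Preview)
Your proposal follows the same two-step structure as the paper's proof: first derive $q(r)=r(1/4)^{r-1}(3/4)^2$ for the unconstrained uniform measure by summing over the $r$ possible offsets of position $i$ within its run (the paper does exactly this, with a picture), then pass to $q_m(r)$ by restricting to $r\le m$ and renormalizing.

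The one point of divergence is rigor in the second step. You correctly flag the decoupling of the global constraint $\{\mathbf{X}\in\mathcal{A}_m\}$ from the local event $\{R_i\le m\}$ as the main obstacle, and you outline a generating-function or renewal-theoretic justification. The paper does not supply this; it simply asserts that ``no nucleotide $X_i$ can be part of a run longer than $m$'' and therefore ``we adjust the probabilities to account for the constrained space'' by dividing $q(r)$ by $\sum_{s=1}^m q(s)$. So your proposal is in fact more careful than the paper at the precise point you identify as delicate; the paper treats the renormalization as self-evident rather than proving the asymptotic factorization you describe.
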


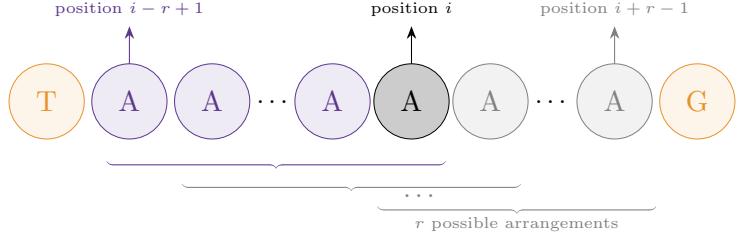
\begin{figure}[t]
    \centering
    \begin{tikzpicture}[>=Stealth]
            \draw[very thin, draw=Peach, fill=Peach!10, text=Peach] (-0.8,3) circle[radius=0.5] node at (-0.8,3) {T};
            \draw[ very thin,  draw=Orchid, fill=Orchid!10, text=Orchid] (0.3,3) circle[radius=0.5] node at (0.3,3) {A};
            \draw[draw=Orchid, fill=Orchid!10, text=Orchid, very thin] (1.4,3) circle[radius=0.5] node at (1.4,3) {A};
        \draw (2.2,3) node {\small{$\ldots$}};
        \draw[draw=Orchid, fill=Orchid!10, text=Orchid, very thin] (3,3) circle[radius=0.5] node at (3,3) {A};
        \draw[very thin, draw=black, fill=gray!40, text=black] (4.05,3) circle[radius=0.5] node at (4.05,3) {A};
        \draw[very thin, draw=gray, fill=gray!10, text=gray] (5.1,3) circle[radius=0.5] node at (5.1,3) {A};
        \draw (5.9,3) node {\small{$\ldots$}};
            \draw[very thin, draw=gray, fill=gray!10, text=gray] (6.75,3) circle[radius=0.5]node at (6.75,3) {A};
            \draw[draw=Peach, fill=Peach!10, text=Peach, very thin] (7.85,3) circle[radius=0.5]node at (7.85,3) {G};


\draw [pen colour=Orchid,decorate,decoration = {calligraphic brace}, ] (4.5,2.2)--(0,2.2) ;

\draw [pen colour=gray,decorate,decoration = {calligraphic brace}, ] (5.5,1.9)--(1,1.9) node[below , pos=0.3, color=gray]{\small{$\ldots$}};

\draw [pen colour=gray,decorate,decoration = {calligraphic brace}, ] (7.3,1.6)--(3.6,1.6)node[below , pos=0.5, color=gray]{\tiny{$r$ possible arrangements}} ;

 \draw[->, color=black] (4.05,3.5) --((4.05,4) node[above]{\tiny{position $i$}};
 \draw[->, color=Orchid] (0.3,3.5) --((0.3,4)node[above]{\tiny{position $i-r+1$}};
  \draw[->, color=gray] (6.75,3.5) --((6.75,4)node[above]{\tiny{position $i+r-1$}};
    \end{tikzpicture}
    \captionsetup{labelfont=bf}
    \caption{ Visual illustration of Lemma~\ref{Lemma:2}. There are $r$ possible locations for nucleotide $X_i$ at position $r-1<i<n-r+1$ to occur in the run of length $r$. }
    \label{fig9}
\end{figure}

\begin{proof}
First, consider the probability \(\Pr(R_i=r)\) when \(\mathbf{X}\) is chosen uniformly at random from the set of all possible sequences \(\{A,C,G,T\}^n\), where each nucleotide occurs with equal probability of \(1/4\). The probability of observing \(r-1\) consecutive identical nucleotides is \(\left(1/4\right)^{r-1}\). To form a run of length exactly \(r\), the run must be preceded and followed by a different nucleotide, occurring with a probability of \(\left(3/4\right)^2\). Within the run, nucleotide \(X_i\), where \(r-1 < i < n-r+1\), can be in any of the \(r\) positions. As the sequence length \(n\) increases, the effect of edge positions, where \(X_i\) has fewer than \(r\) possible positions, becomes negligible. Thus, as \(n\) approaches infinity, \(\Pr(R_i=r)\) for \(\mathbf{X} \in \{A,C,G,T\}^n\) converges in distribution to \(q(r)=r\left(1/4\right)^{r-1}\left(3/4\right)^2\). Figure~\ref{fig9} illustrates this. For sequences \(\mathbf{X} \in \mathcal{A}_m\), no nucleotide \(X_i\) can be part of a run longer than \(m\). Therefore, for run-lengths \(r \leq m\), we adjust the probabilities to account for the constrained space of possible sequences. The normalization factor is \(\Pr(R_i=r \leq m) = \sum_{r=1}^m \Pr(R_i=r)\), and the probability \(\Pr(R_i=r)\) converges in distribution to \(q_m(r)=q(r)/ \sum_{s=1}^m q(s)\).
\end{proof}


Using Lemma~\ref{Lemma:2}, we further have:
\begin{align*}
    I\left(P^{\mathbf{X}}; P^{\mathbf{Y}}\right) &{=} H\left(P^{\mathbf{Y}}\right) - \lim_{n \to \infty} \frac{1}{n} \sum_{i=1}^n \sum_{r=1}^n \Pr(R_i=r) H(Y_i | X_i, r) \\
    &{=} H\left(P^{\mathbf{Y}}\right) - \sum_{r=1}^n q(r) H(p_r),
\end{align*}
where \(H(p_r)\) is defined in Theorem~\ref{th1} and is the entropy of a quaternary random variable that retains its state with probability \(1-p_r\) and substitutes to one of the other three states with probability \(p_r/3\). This concludes the proof for the achievable code rate \(R_c\) in Equation~\eqref{R_c} of Theorem~\ref{th1}.

The proof for the achievable code rate \(R_u\) in Equation~\eqref{R_u} for unconstrained coding follows directly by generating codewords with no run-length constraint according to a uniform distribution over the nucleotides, where each nucleotide occurs with an equal probability of \(1/4\). Therefore, we omit the proof.

For unconstrained coding, the distribution over the output nucleotides \(Y_i\) is uniform given a uniform input distribution and symmetric substitution probabilities. Thus, the entropy of the output process has a closed-form expression \(H\left(P^{\mathbf{Y}}\right) = 2\). However, for run-length \(m\)-constrained coding, \(H\left(P^{\mathbf{Y}}\right)\) has no closed-form expression, and we estimate \(H\left(P^{\mathbf{Y}}\right)\) as described in the next Subsection~\ref{MCMC}.

\subsection{MCMC Simulation-Based Estimation}
\label{MCMC}

We estimate the entropy \(H(P^{\mathbf{Y}})\) of the output process using Markov Chain Monte Carlo (MCMC) simulations, as described in~\citet{jurgens_shannon_2021}. 

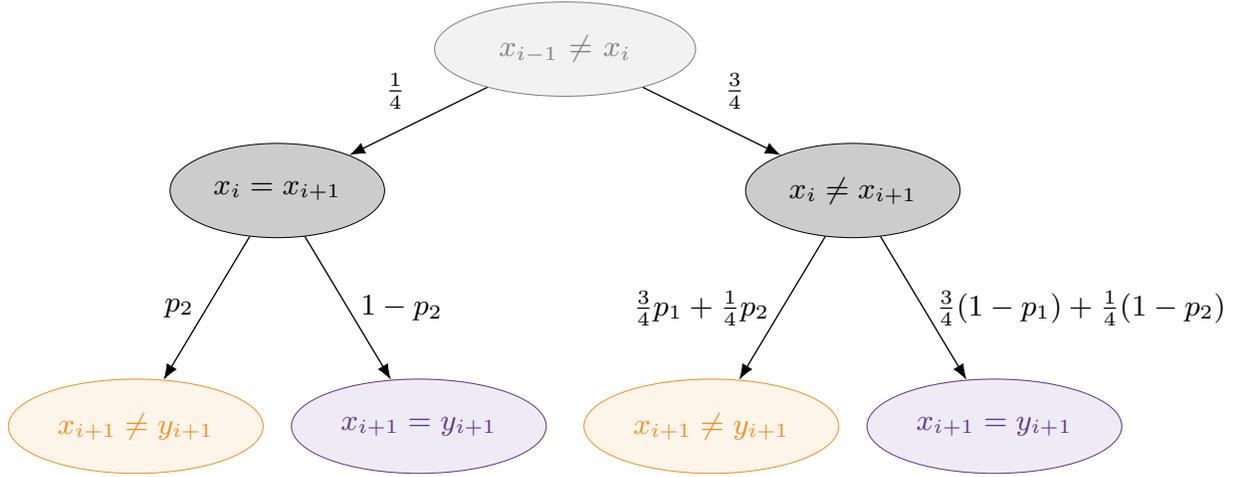
\begin{figure}[t]
    \centering
    \resizebox{\linewidth}{!}{%
    \begin{tikzpicture}[
        rootnode/.style = {ellipse, draw=gray, align=center, text width=1.7cm, minimum height=1cm, very thin, fill=gray!10, text=gray},
        greynode/.style = {ellipse, draw=black, align=center, minimum height=1cm, very thin, fill=gray!40, text=black},
        nonerrornode/.style = {ellipse, draw=Orchid, align=center, minimum height=1cm, very thin, fill=Orchid!10, text=Orchid,opacity=1},
        errornode/.style = {ellipse, draw=Peach, align=center, minimum height=1cm, very thin, fill=Peach!10, text=Peach,opacity=1},
        edge from parent/.style={draw, -Latex},
        level 1/.style={sibling distance=6.1cm, level distance=1.5cm},
        level 2/.style={sibling distance=3cm, level distance=2.5cm},
        every node/.style = {font=\footnotesize}
    ]

    \node[rootnode] (root1) {\(x_{i-1} \neq x_{i}\)}
        child {node[greynode] (left1) { \(x_{i} = x_{i+1}\)}
            child {node[errornode] (leftchild11) {\(x_{i+1} \neq y_{i+1}\)} edge from parent node[left] {$p_2$}}
            child {node[nonerrornode] (leftchild12) {\(x_{i+1} = y_{i+1}\)} edge from parent node[right] {$1-p_2$}}
            edge from parent node[above left] {\(\frac{1}{4}\)} 
        }
        child {node[greynode] (right1) { \(x_{i} \neq x_{i+1}\)}
            child {node[errornode] (rightchild11) {\(x_{i+1} \neq y_{i+1}\)} edge from parent node[left] {$\frac{3}{4}p_1+\frac{1}{4}p_2$}}
            child {node[nonerrornode] (rightchild12) {\(x_{i+1} = y_{i+1}\)} edge from parent node[right] {$\frac{3}{4}(1-p_1)+\frac{1}{4}(1-p_2)$}}
            edge from parent node[above right] {\(\frac{3}{4}\)} 
        };

    \end{tikzpicture} %
    }
    \captionsetup{labelfont=bf}
    \caption{Transition tree for a Hidden Markov Model with maximum run-length constraint \(m=2\).}
    \label{example}
\end{figure}

The process \(P^{\mathbf{Y}}\) that generates output sequences \(\mathbf{Y}\) follows a hidden Markov model. This model generates the output sequence \(\mathbf{Y}\) by transmitting a codeword \(\mathbf{X}\) through a run-length varying channel with substitution probabilities \(p_r\). The codeword \(\mathbf{X}\) is generated according to a Markovian distribution over the input nucleotides. The states of the hidden Markov model represent the transmitted nucleotides (the number of which depends on constraint \(m\)) and the current observed nucleotide. The states are hidden because the transmitted nucleotides are not directly observed due to channel noise. For example, for the constraint \(m=2\), the state space consists of \(4^3\) states, where each state is a combination of the previous transmitted nucleotide \(X_{i-1}\), the current transmitted nucleotide \(X_i\), and the current observed nucleotide \(Y_i\). The transition matrix \(\mathbf{B}\) is constructed as follows:

\begin{enumerate}
    \item For states where \(X_{i-1} = X_i\):
    \begin{itemize}
        \item There are no entries in the transition matrix for states where \(X_i = X_{i+1}\), because the next nucleotide \(X_{i+1}\) must be different from \(X_i\) to satisfy run-length constraint \(m=2\). Therefore, the probability of transitioning to these states is zero.
        \item For all remaining states where \(X_i \neq X_{i+1}\), the probability of transitioning to these states depends on the probability of the next input nucleotide \(X_{i+1}\) and the substitution probability for \(X_{i+1}\). The probability for each next nucleotide \(X_{i+1} \neq X_i\) is \(\frac{1}{3}\). The substitution probability for \(X_{i+1}\) depends on its run-length. Since \(X_i \neq X_{i+1}\), \(X_{i+1}\) does not occur in a run with \(X_i\). However, \(X_{i+1}\) can occur in a run of length $2$ with \(X_{i+2}\). Therefore, we weigh the substitution probability by the probability that \(X_{i+1}\) occurs in a run of length one (and thus is substituted with probability \(p_1\)), and the probability that it occurs in a run of length two with \(X_{i+2}\) (and thus is substituted with probability \(p_2\)).
    \end{itemize}
    \item For states where \(X_{i-1} \neq X_i\):
    \begin{itemize}
        \item The transition probabilities between states are constructed similarly, but now \(X_{i+1}\) can be one of all four nucleotides, each with probability \(\frac{1}{4}\). The transition probabilities must also account for the run-length in which \(X_{i+1}\) occurs. For states where \(X_i = X_{i+1}\), the substitution probability is \(p_2\). For states where \(X_i \neq X_{i+1}\), the substitution probability is again weighted by the probability of \(X_{i+1}\) occurring in a run of length one and a run of length two with \(X_{i+2}\).
    \end{itemize}
\end{enumerate}

Figure~\ref{example} illustrates the transition probabilities for the hidden Markov model when \(m=2\) and Figure~\ref{transitions} illustrates the transition matrices \(\mathbf{B}\) for maximum run-length constraints \(m=1, 2,\) and \(3\).

\begin{figure}[t]
  \centering
  \begin{overpic}[scale=0.9,unit=1mm]{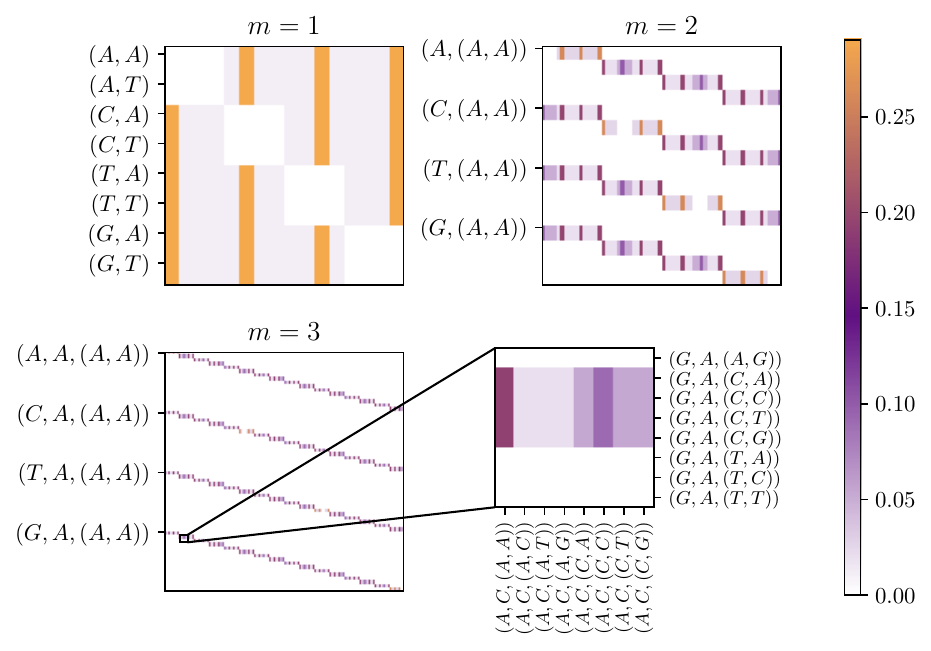}
    \put(-2,30){\rotatebox{90}{Previous State}}
    \put(25,2){Next State}   
  \end{overpic}
  \captionsetup{labelfont=bf}
  \caption{Transition matrices \( \mathbf{B} \) for maximum run-length constraints \( m=1, 2,\) and \( 3 \) with base error probability \( p_1=0.1 \) and growth factor \( \alpha=0.5 \).}
  \label{transitions}
\end{figure}

In a hidden Markov model \(P^{\mathbf{Z}}\), where the transition to the next state is uniquely determined given the current state and the observed symbol, the entropy rate has a closed-form expression and can be calculated as follows:
\[
H(P^{\mathbf{Z}}) = -\sum_{ij} \boldsymbol{\pi}_i \mathbf{B}_{ij} \log \mathbf{B}_{ij},
\]
where \(\mathbf{B}\) is the transition matrix and \(\boldsymbol{\pi}\) is the stationary distribution over the states, i.e., we can calculate the entropy by summing the entropies of individual transitions between states, weighted by how frequently we are in each state.

However, for the hidden Markov model \(P^{\mathbf{Y}}\) that we consider, the transition to the next state, given the current state and observed symbol, is non-deterministic due to the channel noise. \citet{jurgens_shannon_2021} propose restoring this deterministic property to estimate the entropy \(H(P^{\mathbf{Y}})\). The authors propose changing the state representation of the hidden Markov model to use mixed states, denoted by the vector \(\boldsymbol{\eta}\). The mixed state vector \(\boldsymbol{\eta}_i\) captures the uncertainty about which state the process is in after observing a sequence of nucleotides \(Y_1 Y_2 \cdots Y_{i-1}\) and can be interpreted as the decoder's belief at time \(i\).

The mixed state \(\boldsymbol{\eta}_0\) at time \(0\), before any output nucleotides are observed, is initialized with the stationary distribution over the states in transition matrix \(\mathbf{B}\). After observing symbol \(Y_i\), the decoder updates its belief about being in each state (i.e., updates its mixed state vector) according to:
\begin{align}
\boldsymbol{\eta}_{i+1} &= \frac{\boldsymbol{\eta}_i \mathbf{B}^{Y_i}}{\boldsymbol{\eta}_i \mathbf{B}^{Y_i} \mathbf{1}}, 
\label{update}
\end{align}
where \(\mathbf{1}\) is a vector of all ones and \(\mathbf{B}^{Y_i}\) is the nucleotide-specific transition matrix of the hidden Markov model in its original state representation. \(\mathbf{B}^{Y_i}\) is constructed by setting all entries in the transition matrix \(\mathbf{B}\) to zero where the current observed symbol is not \(Y_i\).

The probability of observing the next nucleotide \(Y_{i+1} \in \{A, C, G, T\}\) can then be computed by weighting the nucleotide-specific transition matrix \(\mathbf{B}^{Y_{i+1}}\) according to the mixed state vector \(\boldsymbol{\eta}_i\):
\[
\Pr(Y_{i+1} \mid \boldsymbol{\eta}_i) = \boldsymbol{\eta}_i \mathbf{B}^{Y_{i+1}} \mathbf{1}.
\]

With this new state representation, the hidden Markov model is now deterministic. Given the current mixed state \(\boldsymbol{\eta}_i\) and the observed symbol \(Y_i\), the transition to the next mixed state vector \(\boldsymbol{\eta}_{i+1}\) is uniquely determined by Equation~\eqref{update}. 

However, a challenge with this mixed state representation is that the set of mixed states is typically infinite. To address this, \citet{jurgens_shannon_2021} propose estimating the entropy of hidden Markov models in the mixed state representation by analyzing a finite, but sufficiently long, trajectory of mixed state vectors:  
\begin{equation*}
\hat{H}(P^{\mathbf{Y}}) = -\lim_{n \to \infty} \frac{1}{n} \sum_{i=0}^{n} \sum_{Y_{i+1} \in \{A, C, G, T\}} \Pr(Y_{i+1} \mid \boldsymbol{\eta}_i) \log_2 \left( \Pr(Y_{i+1} \mid \boldsymbol{\eta}_i) \right)
\end{equation*}

\citet{jurgens_shannon_2021} show that for hidden Markov models with transition matrix \(\mathbf{B}\) nonnegative, irreducible, and aperiodic, we have: 
\[
\hat{H}(P^{\mathbf{Y}}) \rightarrow H(P^{\mathbf{Y}}) \quad \text{as} \quad n \to \infty.
\]

Algorithm~\ref{algorithm} provides the pseudocode that summarizes how we obtain the entropy estimate \(\hat{H}(P^{\mathbf{Y}})\) using \citet{jurgens_shannon_2021}'s method.

\begin{algorithm}[H]
\caption{Entropy Convergence of a Hidden Markov Process in Mixed State Representation}
\label{algorithm}

\begin{algorithmic}[1]
\Require{$\text{convThresh}$, $\text{stabReq}$}
\Initialize{$Y_i \gets \{A,C,G,T\}$, $\boldsymbol{\eta} \gets \boldsymbol{\pi}$, entropy $\gets 0$, $H(P^{\mathbf{Y}}) \gets [\,]$, $\hat{H}(P^{\mathbf{Y}}) \gets [\,]$, $\text{stabCount} \gets 0$, $\text{converged} \gets \text{False}$}
\For{each $y_i$ in $Y_i$}
  \State $\mathbf{B}^{y_i} \gets \Call{generate\_symbol\_transition\_matrix}{y_i}$
  \State $\mathbf{B} \gets \mathbf{B} + \mathbf{B}^{y_i}$ 
\EndFor
\State Check if $\mathbf{B}$ is nonnegative, irreducible, and aperiodic
\State $\boldsymbol{\eta} \gets \Call{calculate\_stationary\_distribution}{\mathbf{B}}$
\While{not $\text{converged}$}
    \For{each $y_i$ in $Y_i$}
    \State $\Pr(y_i \mid \boldsymbol{\eta}) \gets \boldsymbol{\eta} \mathbf{B}^{y_i} \mathbf{1}$
    \If{$\Pr(y_i \mid \boldsymbol{\eta}) > 0$}
        \State $\text{entropy} \gets \text{entropy} + (-\Pr(y_i \mid \boldsymbol{\eta}) \cdot \log_2(\Pr(y_i \mid \boldsymbol{\eta})))$
    \EndIf
    \State Append $\Pr(y_i \mid \boldsymbol{\eta})$ to probabilities
\EndFor
    \State $y_i \gets$ Sample from $Y_i$ based on probabilities
    \State $\boldsymbol{\eta} \gets \frac{\boldsymbol{\eta} \mathbf{B}^{y_i}}{\boldsymbol{\eta} \mathbf{B}^{y_i} \mathbf{1}}$
    \State Append $\text{entropy}$ to $H(P^{\mathbf{Y}})$
    \State $\hat{H}(P^{\mathbf{Y}}) \gets$ Mean of $H(P^{\mathbf{Y}})$
    \If{$|\hat{H}(P^{\mathbf{Y}}) - \text{prev}\hat{H}(P^{\mathbf{Y}})| < \text{convThresh}$}
        \State $\text{stabCount} \gets \text{stabCount} + 1$
    \Else
        \State $\text{stabCount} \gets 0$
    \EndIf
    \If{$\text{stabCount} = \text{stabReq}$}
        \State $\text{converged} \gets \text{True}$
    \EndIf
    \State $\text{prev}\hat{H}(P^{\mathbf{Y}}) \gets \hat{H}(P^{\mathbf{Y}})$
\EndWhile
\end{algorithmic}
\end{algorithm}

\newpage 
\section{Proof of Lemma~\ref{HB}}
\label{pf2}

Assume, without loss of generality, that the sequence length \( n \) is even. Let us first calculate the Hamming ball volume when \(\epsilon = 0\) and then generalize to \(0 \leq \epsilon \leq 0.5\). We apply the expression from \citet{king_bounds_2004} for constant GC contents $w$ by setting \(w = 0.5n\). This gives the Hamming ball volume for any sequence $\mathbf{X}$ in constrained space $\mathcal{S}_{0}$, where $\mathcal{S}_{0}$ is the subset of sequences $\mathbf{X} \in \{A,C,G,T\}^n$ with balanced GC content $w=0.5n$:
    \begin{equation}
        V_{0}(\mathbf{X})=\sum_{r=0}^{d-1}\sum_{i=0}^{\min\left(\left\lfloor\frac{r}{2}\right\rfloor, 0.5n\right)} \binom{0.5n}{i}\binom{n-0.5n}{i}\binom{n-2i}{r-2i}2^{2i}.
        \label{GV_0}
    \end{equation}

The outer summation iterates over all possible Hamming distances up to $d-1$. The inner summation accounts for all possibilities in which $i$ substitutions of nucleotides are made from $\{G, C\}$ to $\{A, T\}$ and vice versa, ensuring each substitution is counterbalanced to maintain a balanced GC content of $w=0.5n$. This results in a total of $2i$ substitutions. The binomial coefficients $\binom{0.5n}{i}$ and $\binom{n-0.5n}{i}$ calculate the number of ways to select $i$ nucleotides for substitution from the $0.5n$ nucleotides within $\{G, C\}$, and from the $n-0.5n$ nucleotides within $\{A, T\}$, respectively. The term $2^{2i}$ counts the number of possibilities for these $2i$ substitutions, considering that each substituted nucleotide can be replaced with either of two options (either $A$ or $T$ for a nucleotide from $\{G, C\}$ and vice versa). Finally, $\binom{n-2i}{r-2i}$ accounts for the remaining substitutions needed to achieve a Hamming distance exactly $r$ from the center sequence. Since the remaining substitutions must also preserve the GC content, there is only one possible substitution for each selected position (i.e., if a position with $G$ is selected, it can only be substituted to $C$ and vice versa, and similarly for substitutions within the $\{A, T\}$ group).

We extend the result by \citet{king_bounds_2004} to constrained spaces \( \mathcal{S}_{\epsilon} \) by accounting for the possibility that sequences within the Hamming ball $|V_{\epsilon}(\mathbf{X})|$ can differ in GC content to their center sequence $\mathbf{X}$ (constrained spaces \( \mathcal{S}_{\epsilon} \) allow for a range of GC contents determined by $\epsilon$).

We classify substitutions into three categories based on how they affect the sequence's GC content:

\begin{itemize}
    \item \textbf{Increasing substitutions} (\(i_+\)): Substitutions that change nucleotides from the set \(\{A, T\}\) to those in the set \(\{G, C\}\), thereby increasing the GC content of the sequence.
    \item \textbf{Decreasing substitutions}: Substitutions that change nucleotides from the set \(\{G, C\}\) to those in the set \(\{A, T\}\), thereby reducing the GC content of the sequence.
    \item \textbf{Preserving substitutions}: Substitutions within a single nucleotide group (i.e., from \(G\) to \(C\) and vice versa, and from \(A\) to \(T\) and vice versa) that do not alter the GC content of the sequence.
\end{itemize}

We calculate the volume of the Hamming ball \( |V_{\epsilon}(\mathbf{X})| \) centered at a sequence $\mathbf{X}$ with GC content $w$ by considering all substitution combinations that keep the GC content within a deviation $\epsilon$ from a balanced GC content, i.e., within constrained space \( \mathcal{S}_{\epsilon} \):

\begin{align}
|V_{\epsilon}(\mathbf{X})| = \sum_{r=0}^{d-1} \sum_{\Delta = \max(\lceil(0.5 - \epsilon) n\rceil - w, -r)}^{\min(\lfloor(0.5 + \epsilon) n\rfloor - w, r)}  \sum_{i_+ = \max(0, \Delta)}^{\min(\Delta + w, r)}  \binom{w}{i_+ - \Delta} \binom{n-w}{i_+} \binom{n-i_+ - (i_+ - \Delta)}{r-i_+ - (i_+ - \Delta)} 2^{i_+ + (i_+ - \Delta)}. 
\end{align}

The middle summation iterates over \(\Delta = i_+ - i_-\), which is the total change in GC content due to substitutions. The range of \(\Delta\) is constrained by:
\begin{itemize}
    \item \textbf{Maximum Decrease:} A negative \(\Delta\) indicates a total decrease in GC content. The largest allowable decrease is constrained by the smaller of two values: the total number \(r\) of substitutions, and the maximum decrease allowed by constraint \(\epsilon\), taking into account the GC content $w$ of the sequence. Thus, the lowest value for \(\Delta\) is \(\max(\lceil (0.5 - \epsilon) n \rceil - w, -r)\).
    \item \textbf{Maximum Increase:} Conversely, a positive \(\Delta\) indicates a total increase in GC content. The upper limit for \(\Delta\) is determined by the smaller of two values: \(r\) and the difference between the maximum permissible GC content defined by constraint \(\epsilon\) and the GC content \(w\) of the sequence. Thus, the highest value for \(\Delta\) is \(\min(\lfloor (0.5 + \epsilon) n \rfloor - w, r)\).
\end{itemize}

The inner summation iterates over all possible combinations of \(i_+\) and \(i_-\) that achieve total GC content change \(\Delta\). The binomial coefficient \(\binom{n}{k}\) equals zero when \(k > n\). This ensures that we do not make more increasing or decreasing substitutions than the available \(\{A,T\}\) and \(\{G,C\}\) nucleotides in the sequence.




\section{Supplementary results: Homopolymers}
\label{App C}

In Section~\ref{lbr}, we consider a linear growth model for the substitution rate increase. In this supplement, we consider two additional growth models. 



Figure~\ref{App10} shows the error regimes in which $m$-constrained coding is more efficient than unconstrained coding and vice versa for an exponential growth model $p_{r}=\min (0.75, \hspace{0.1cm} pe^{\alpha(r-1)})$ with $p=1\%$. 
The relative differences in substitution rate increase slowly for shorter runs and quickly for longer runs. At the same time, the probability of reading a long run goes to zero. Therefore, the increase in substitution rate at which $m$-constrained coding becomes efficient is larger in the exponential growth model than in the linear one. The findings are consistent in that the increase in substitution rate must be large for $m$-constrained coding to be efficient.

\begin{figure}[t]
    \centering
    \includegraphics[scale=1]{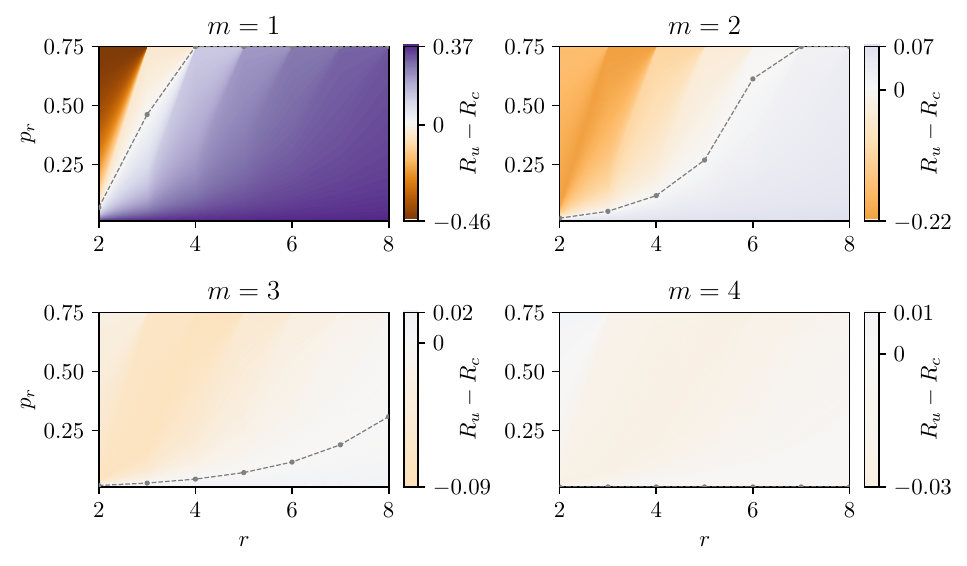}
\captionsetup{labelfont=bf}
    \caption{Error regimes for an exponential growth model $p_{r}=\min (0.75, \hspace{0.1cm} pe^{\alpha(r-1)})$.}
    \label{App10}
\end{figure}

Figure~\ref{App9} shows the error regimes in which $m$-constrained and unconstrained are more efficient for a logarithmic growth model $ p_{r}= \min (0.75, \hspace{0.1cm} \alpha \ln(r)+p)$ with $p=1\%$. The increase in substitution rate at which $m$-constrained becomes efficient is smaller in the logarithmic growth model than in the linear one. However, the results are consistent in that a large increase in the substitution rate is necessary for constraint $m=1$ to be efficient. For weaker constraints $m=2, 3$ and $4$, the difference in code rate goes to zero and unconstrained coding is always preferred due to its less complex code design. 

\begin{figure}[t]
    \centering
    \includegraphics[scale=1]{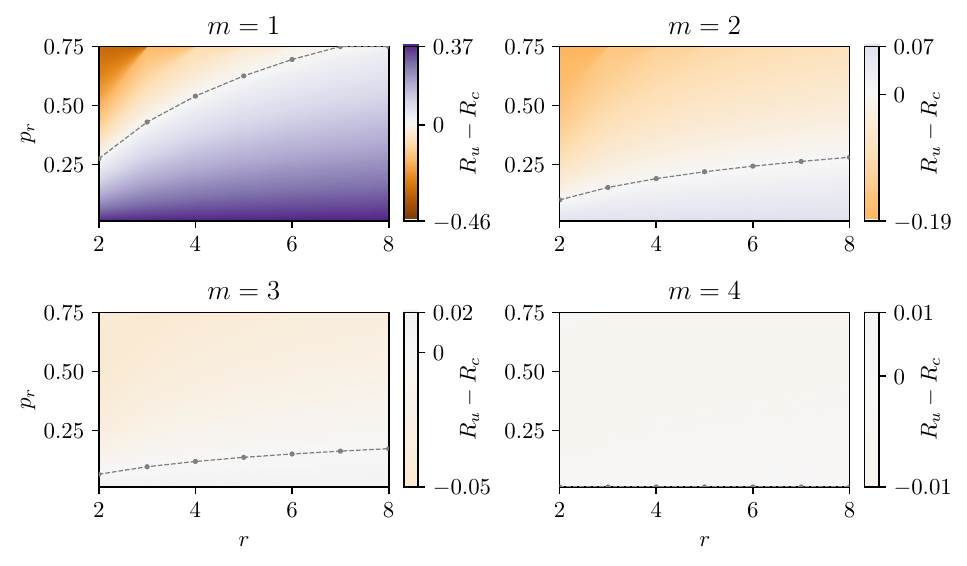}
\captionsetup{labelfont=bf}
   \caption{Error regimes for a logarithmic growth model $ p_{r}= \min (0.75, \hspace{0.1cm} \alpha \ln(r)+p)$.}
    \label{App9}
\end{figure}

\section{Supplementary results: Empirical error analysis}
\label{App D}

In this section, we provide additional details on the empirical results.

\paragraph{Run-length distribution.}
We start by providing an overview of the descriptive statistics for the datasets considered in Section~\ref{emp} to explain in more detail why we restrict our analysis to runs up to and including length six and GC content between $35-65\%$. 

Figure~\ref{App12} shows the frequency of the run-lengths in the datasets considered. As expected, long runs are rare in all datasets considered. The sample size of nucleotides that are part of runs longer than six is too small to obtain reliable error rate estimates. 

\begin{figure}[t]
    \centering
    \begin{tikzpicture}
        \begin{groupplot}[
            group style={
                group size=4 by 2,
                vertical sep=1.9cm,
                horizontal sep=1cm,
                ylabels at=edge left,  
                xlabels at=edge bottom 
            },
            ybar,
            xmax=10,
            xlabel=run-length $r$,
            ylabel=Frequency,
            width=4.5cm,
            height=3.8cm,
            grid=none, 
            xtick style={draw=none},
            title style={yshift=0.9ex}, 
        ]
            
            \nextgroupplot[title={Srinivasavaradhan et al.}]
            \addplot+[bar width=6.5pt, fill=Orchid!40, draw=Orchid] table[x=length, y=n_seqs] {Srinivasavaradhan_sub.dat};

            \nextgroupplot[title={Antkowiak et al.}, xmax=10, xtick={2,4,6,8,10,12}]
            \addplot+[bar width=6.5pt, fill=Orchid!40, draw=Orchid] table[x=length, y=n_seqs] {Antkowiak_sub.dat};

            \nextgroupplot[title={Netflix-Pool}, xmax=10, xtick={2,4,6,8,10,12}]
            \addplot+[bar width=6.5pt, fill=Orchid!40, draw=Orchid] table[x=length, y=n_seqs] {netflix_del.dat};
            
            \nextgroupplot[title={Gimpel et al. \uppercase\expandafter{\romannumeral1}}]
            \addplot+[bar width=6.5pt, fill=Orchid!40, draw=Orchid] table[x=length, y=n_seqs] {GGall_sub.dat};
            
            \nextgroupplot[title={Gimpel et al. \uppercase\expandafter{\romannumeral2}}]
            \addplot+[bar width=6.5pt, fill=Orchid!40, draw=Orchid] table[x=length, y=n_seqs] {GGfix_sub.dat};

            \nextgroupplot[title={Gimpel et al. \uppercase\expandafter{\romannumeral3}}]
            \addplot+[bar width=6.5pt, fill=Orchid!40, draw=Orchid] table[x=length, y=n_seqs] {GTall_sub.dat};

            \nextgroupplot[title={Gimpel et al. \uppercase\expandafter{\romannumeral4}}]
            \addplot+[bar width=6.5pt, fill=Orchid!40, draw=Orchid] table[x=length, y=n_seqs] {GTfix_sub.dat};
        \end{groupplot}
    \end{tikzpicture}
    \captionsetup{labelfont=bf}
    \caption{
    Frequency of run-lengths in the DNA storage systems considered. Runs of lengths one and two occur with highest frequency, and long runs are observed with low frequency. This is consistent with the theoretical run distribution and its mean run-length of $1.6$ established in Section~\ref{cr}. The sample size for runs longer than six is too small to obtain reliable error rate estimates. Therefore, we limit our analysis in Section~\ref{emp} to runs up to and including length six. }
    \label{App12}
\end{figure}
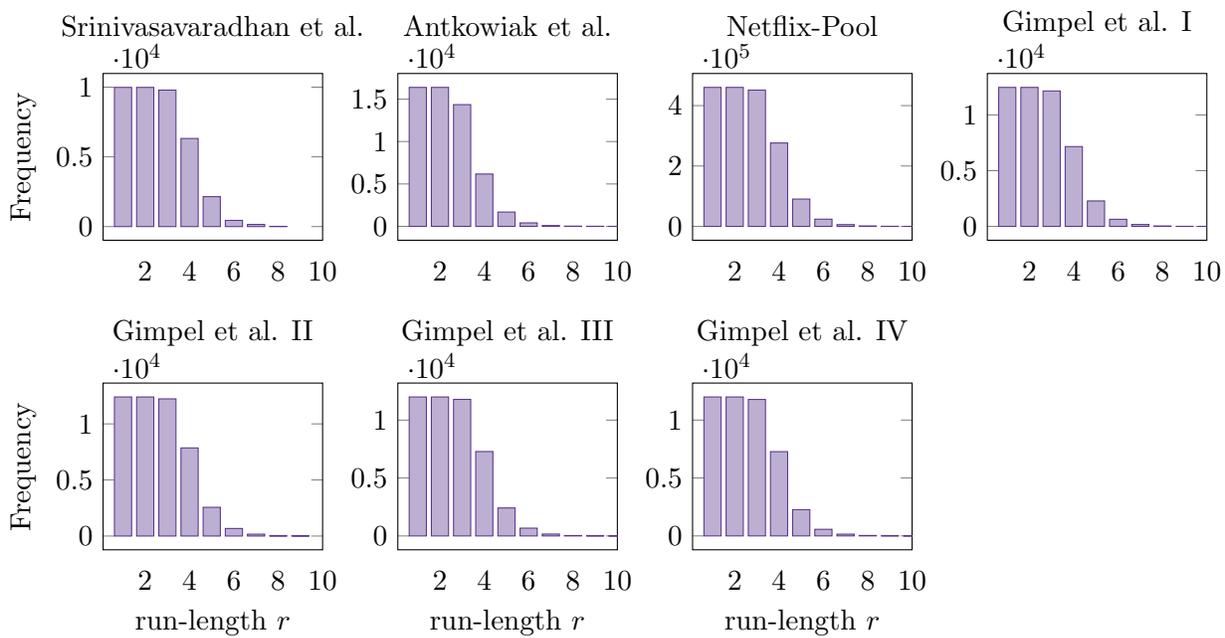

\paragraph{GC content distribution.} Similarly, Figure~\ref{App13} shows the distribution of GC content in the DNA storage systems considered. The empirical GC content distributions are consistent with the theoretical distribution established in Section~\ref{crGC}; that is, extreme unbalances in GC content are observed infrequently. Consequently, our error analysis is limited to sequences with GC content between $0.35$ and $0.65$ to ensure sufficient sample sizes for our error rate estimates. 
\begin{figure}[t]
    \centering
    \begin{tikzpicture}
        \begin{groupplot}[
            group style={
                group size=3 by 2,
                vertical sep=1.9cm,
                horizontal sep=1.1cm,
                ylabels at=edge left,  
                xlabels at=edge bottom 
            },
            ybar,
            xlabel=GC content $w$,
            ylabel=Frequency,
            width=6cm,
            height=3.8cm,
            grid=none, 
            xtick pos=bottom,
            title style={yshift=0.9ex},
            xtick style={draw=none}, 
        ]
                    
            \nextgroupplot[title={Srinivasavaradhan et al.}]
            \addplot+[bar width=2pt, fill=Orchid!40, draw=Orchid] table[x=GC, y=n_reads] {groupedM.dat};

            \nextgroupplot[title={Antkowiak et al.}]
            \addplot+[bar width=5.6pt, fill=Orchid!40, draw=Orchid] table[x=GC, y=n_reads] {groupedH.dat};

            \nextgroupplot[title={Netflix-Pool}]
            \addplot+[bar width=4.4pt, fill=Orchid!40, draw=Orchid] table[x=GC, y=n_reads] {grouped.dat};

            \nextgroupplot[title={Gimpel et al. \uppercase\expandafter{\romannumeral1}}]
            \addplot+[bar width=5.4pt, fill=Orchid!40, draw=Orchid] table[x=GC, y=n_reads] {groupedG.dat};
            
            \nextgroupplot[title={Gimpel et al. \uppercase\expandafter{\romannumeral3}}]
            \addplot+[bar width=2pt, fill=Orchid!40, draw=Orchid] table[x=GC, y=n_reads] {groupedT.dat};
        \end{groupplot}
    \end{tikzpicture}
    \captionsetup{labelfont=bf}
    \caption{Distribution of the GC content in the DNA storage systems considered. Sequences with balanced GC content occur with highest frequency, and sequences with extremely low and high GC content are observed with low frequency. This is consistent with the theoretical GC content distribution and its mean of $50\%$ established in Section~\ref{crGC}. In particular, the sample size for sequences with GC content larger than $65\%$ or smaller than $35\%$ does not provide a reliable error rate estimates. Therefore, in Section~\ref{emp}, we limit our analysis to GC content between $35\%$ and $65\%$.}
    \label{App13}
\end{figure}
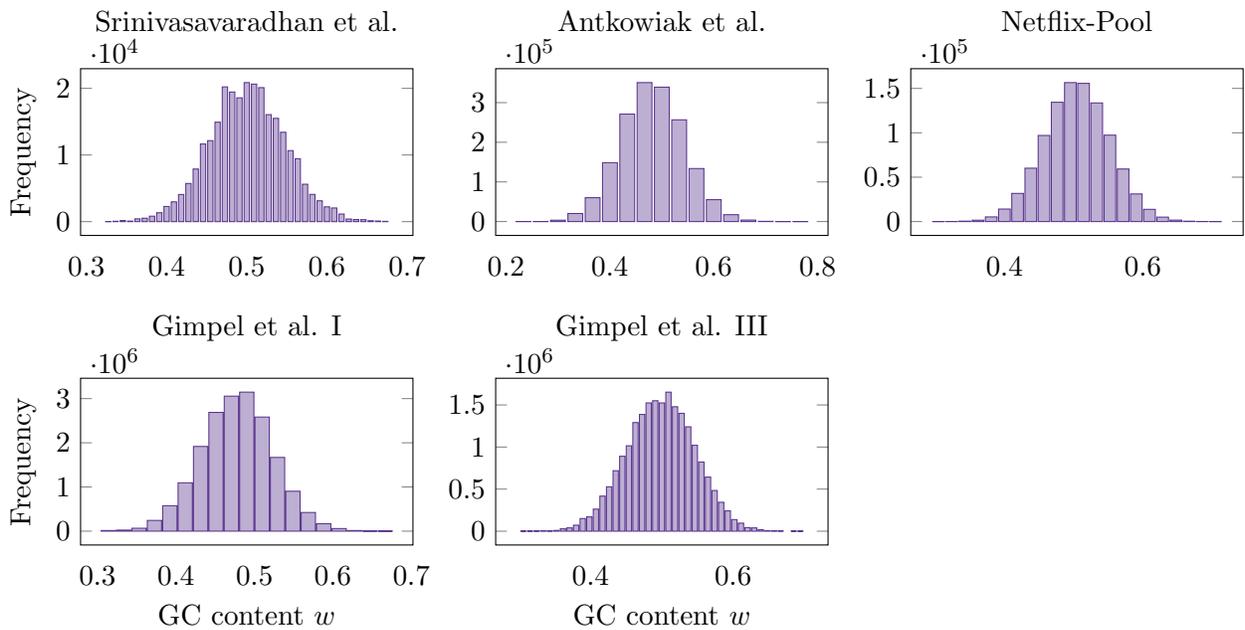

\paragraph{Read coverage.}
Empirical studies suggest that unbalances in GC content can result in non-uniform read distributions~\citep{benjamini_summarizing_2012, dohm_substantial_2008}. While our theoretical results do not address whether constrained coding is efficient in minimizing sequencing efforts, we empirically investigate possible read biases for the DNA storage systems considered.

Figure~\ref{App14} shows a correlation between GC content and read frequency in the DNA storage systems considered. However, to determine the efficiency of constrained coding in reducing sequencing efforts, the additional sequencing cost (due to unbalances in GC content) must be weighed against the additional synthesis cost (due to GC content constraints).

\begin{figure}[t]
    \centering
\begin{tikzpicture}
    \begin{groupplot}[
        group style={
                group size=3 by 2,
                vertical sep=1.7cm,
                horizontal sep=1cm,
                ylabels at=edge left,  
        xlabels at=edge bottom  
        },
        ybar,
        ylabel=rel. abundance,
        xlabel=GC content $w$,
            width=6cm,
            height=3.8cm,
        xmin=0.35,
        xmax=0.65
    ]
        
        \nextgroupplot[title={Srinivasavaradhan et al.}]
        \addplot [
    only marks,
    opacity=0.8,
    color=Peach,
    mark size=1.5,
    error bars/.cd,
    y dir=both,
    y explicit
] table [
    x=GC,
    y=read,
    y error=std_s
] {unequal_M.dat};

        \nextgroupplot[title={Antkowiak et al.}]
\addplot [
    only marks,
    opacity=0.8,
    color=Peach,
    mark size=1.5,
    error bars/.cd,
    y dir=both,
    y explicit
] table [
    x=GC,
    y=read,
    y error=std_s
] {unequal_A.dat};

\nextgroupplot[title={Netflix-Pool}]
\addplot [
    only marks,
    opacity=0.8,
    color=Peach,
    mark size=1.5,
    error bars/.cd,
    y dir=both,
    y explicit
] table [
    x=GC,
    y=read,
    y error=std_s
] {unequal_N.dat};
        
        \nextgroupplot[title={Gimpel et al. \uppercase\expandafter{\romannumeral1}}]
\addplot [
    only marks,
    opacity=0.8,
    color=Peach,
    mark size=1.5,
    error bars/.cd,
    y dir=both,
    y explicit
] table [
    x=GC,
    y=read,
    y error=std_s
] {unequal_GG.dat};
        
        \nextgroupplot[title={Gimpel et al. \uppercase\expandafter{\romannumeral3}}]
  \addplot [
    only marks,
    opacity=0.8,
    color=Peach,
    mark size=1.5,
    error bars/.cd,
    y dir=both,
    y explicit
] table [
    x=GC,
    y=read,
    y error=std_s
] {unequal_GT.dat};
    \end{groupplot}
\end{tikzpicture}
    \captionsetup{labelfont=bf}
    \caption{Sequencing bias due to GC content unbalances. In all datasets, sequences with low or high GC content are read fewer times than sequences with balanced GC content. The observed trend may be because sequences with extreme GC content are amplified less efficiently during PCR amplification \citep{kozarewa_amplification-free_2009}.}
    \label{App14}
\end{figure}
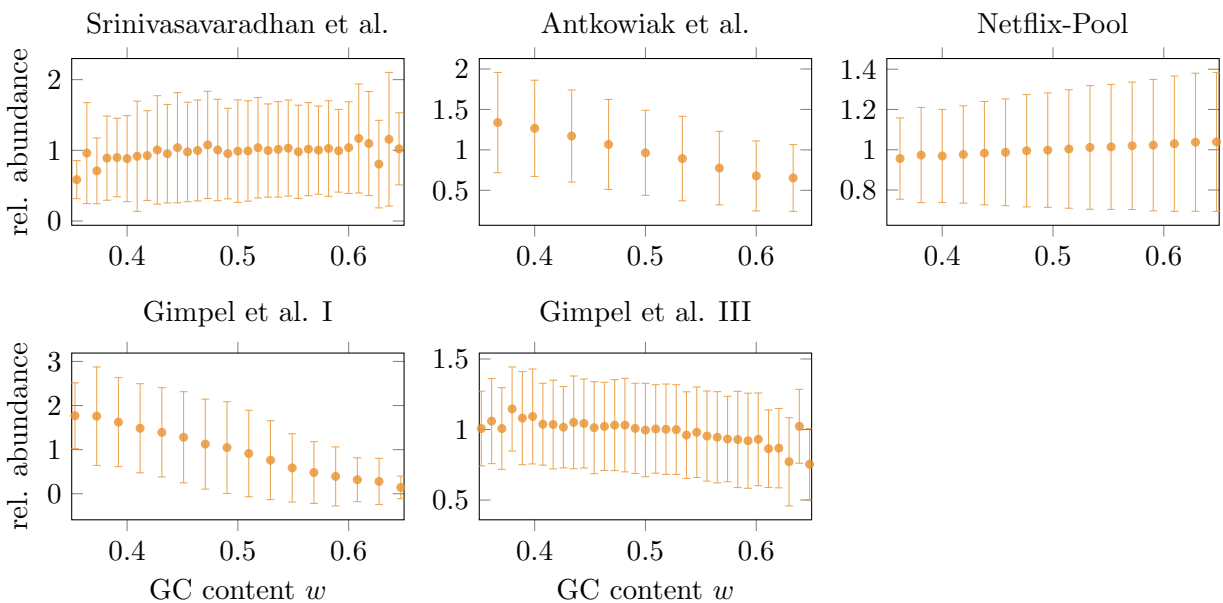


\end{document}